\newcommand{\todo}[1]{\textcolor{green}{\text{[TO DO: } #1\text{]}}} 
\newcommand{\tk}[1]{\textcolor{red}{\text{[TK: }#1\text{]}}}
\renewcommand{\u}{\mathsf{u}} 
\newcommand{\w}{\mathsf{w}} 
\newcommand{\x}{\mathsf{x}} 
\newcommand{\F}{\mathcal{F}} 
\newcommand{\G}{\mathcal{G}} 
\newcommand{\K}{\mathcal{K}} 
\newcommand{\T}{\mathcal{T}} 
\newcommand{\M}{\mathcal{M}} 
\renewcommand{\L}{\mathcal{L}} 
\newcommand{\CC}{\mathcal{C}} 
\newcommand{\I}{\mathcal{I}} 
\newcommand{\ufin}{\mathbf{u}} 
\newcommand{\wfin}{\mathbf{w}} 
\newcommand{\xfin}{\mathbf{x}} 
\newcommand{\Ffin}{\mathbf{F}} 
\newcommand{\Gfin}{\mathbf{G}} 
\newcommand{\Kfin}{\mathbf{K}} 
\newcommand{\Tfin}{\mathbf{T}} 
\newcommand{\CCfin}{\mathbf{C}} 
\newcommand{\Ifin}{\mathbf{I}} 
\newcommand{\II}{\mathbb{I}}
\renewcommand{\H}{\mathfrak{H}} 
\renewcommand{\P}{\mathfrak{P}} 
\newcommand{\W}{\mathfrak{W}} 
\newcommand{\causal}{\H_2^{+}} 
\newcommand{\Was}{\operatorname{W_2}}
\newcommand{\cost}{\operatorname{cost}}
\newcommand{\regret} {\operatorname{Regret}}
\newcommand{\op}{\operatorname{op}} 
\newcommand{\ejw}{\e^{{j}\omega}}
\newtheorem{assumption}[theorem]{Assumption}
\newtheorem{problem}[theorem]{Problem}
\newcommand{\defeq}{\coloneqq}                      
\newcommand{\+}{\!+\!}
\renewcommand{\-}{\!-\!}
\renewcommand{\=}{\!=\!}
\renewcommand{\>}{\!>\!}
\newcommand{\ie}{\textit{i.e.}}
\newcommand{\QED}{\hfill $\blacksquare$}
\newcommand{\Z}{\mathbb{Z}}     
\newcommand{\R}{\mathbb{R}}     
\newcommand{\C}{\mathbb{C}}     
\newcommand{\Prob}{\mathfrak{P}} 
\newcommand{\pr}[1]{\left({#1}\right)}          
\newcommand{\br}[1]{\left[{#1}\right]}          
\newcommand{\cl}[1]{\left\{{#1}\right\}}        
\newcommand{\abs}[1]{\vert{#1}\vert}                    
\newcommand{\norm}[2][\text{}]{\Vert{#2}\Vert_{#1}}     
\newcommand{\clf}[1]{\mathcal{#1}} 
\newcommand{\tr}{\operatorname{tr}}         
\newcommand{\Tr}{\operatorname{Tr}}         
\newcommand{\tp}{\intercal}                 
\newcommand{\inv}{\mathrm{\-1}}             
\newcommand{\psdg}{\succ}          
\newcommand{\spect}{\operatorname{sp}} 
\newcommand{\E}{\operatorname{\mathbb{E}}} 
\renewcommand{\Pr}{\operatorname{\mathbb{P}}} 
\newcommand{\sampled}[1][\text{}]{\stackrel{#1}{\sim}}
\newcommand{\supp}{\operatorname{supp}}             
\newcommand{\beq}[1]{\begin{align*}\label{eq:#1}}
\newcommand{\eeq}{\end{align*}}
\newcommand{\suml}{\sum\nolimits}
\newcommand{\e}{\mathrm{e}}             
\newcommand{\half}{\frac{1}{2}} 
\newcommand{\xMapsto}[2][]{\ext@arrow 0599{\Mapstofill@}{#1}{#2}}
\def\Mapstofill@{\arrowfill@{\Mapstochar\Relbar}\Relbar\Rightarrow}
\title[Wasserstein Distributionally Robust Regret Optimal Control]{Wasserstein Distributionally Robust Regret-Optimal Control in the Infinite-Horizon}
\begin{document}
\maketitle
\vspace{-2mm}
\begin{abstract}
We investigate the Distributionally Robust Regret-Optimal (DR-RO) control of discrete-time linear dynamical systems with quadratic cost over an infinite horizon. 
Regret is the difference in cost obtained by a causal controller and a clairvoyant controller with access to future disturbances.
We focus on the infinite-horizon framework, which results in stability guarantees. In this DR setting, the probability distribution of the disturbances resides within a Wasserstein-2 ambiguity set centered at a specified nominal distribution. 
Our objective is to identify a control policy that minimizes the worst-case expected regret over an infinite horizon, considering all potential disturbance distributions within the ambiguity set. 
In contrast to prior works, which assume time-independent disturbances, we relax this constraint to allow for time-correlated disturbances, thus \emph{actual} distributional robustness.
While we show that the resulting optimal controller is non-rational and lacks a finite-dimensional state-space realization, we demonstrate that it can still be uniquely characterized by a finite dimensional parameter. Exploiting this fact, we introduce an efficient numerical method to compute the controller in the frequency domain using fixed-point iterations. This method circumvents the computational bottleneck associated with the finite-horizon problem, where the semi-definite programming (SDP) solution dimension scales with the time horizon. Numerical experiments demonstrate the effectiveness and performance of our framework. 

\end{abstract}

\begin{keywords}%
  Distributionally Robust Control, Regret-Optimal Control, Wasserstein distance, Infinite-Horizon Control. 
\end{keywords}

\section{{Introduction}} \label{sec:intro}
Ensuring reliable and effective operation in the face of uncertainty is a fundamental challenge in decision-making and control. Control systems are inherently subject to diverse uncertainties, including exogenous disturbances, measurement inaccuracies, modeling discrepancies, and temporal variations in the underlying dynamics \citep{grinten1968uncertainty,doyle1985structured}. Disregarding these uncertainties during controller design can lead to significant performance degradation and even unsafe and unintended behavior \citep{samuelson2017}.

Traditionally, stochastic and robust control frameworks have addressed this issue primarily through the lens of exogenous disturbances \citep{kalman_new_1960,zames_feedback_1981,doyle_state-space_1988}. Stochastic control, exemplified by Linear–quadratic–Gaussian (LQG), or $H_2$, control, aims to minimize the expected cost, assuming disturbances are generated randomly from a known probability distribution \citep{blackbook}. However, in practice, the true distribution is often estimated from sampled data, rendering this approach vulnerable to inaccurate models. Robust control, on the other hand, seeks to minimize the worst-case cost across a set of potential disturbance realizations, like those with bounded energy or power ($H_\infty$ control) \citep{zhou_robust_1996}. While this guarantees robustness against adversarial disturbances, it can be overly conservative, discarding potentially valuable statistical information. To address this issue, two recent approaches have emerged.
\enlargethispage{\baselineskip}

\paragraph{Regret-Optimal (RO) Control.}Introduced by \cite{sabag2021regret, goel2023regret}, this framework provides a promising approach to address both stochastic and adversarial uncertainties. It defines regret as the performance gap between a causal control policy and a clairvoyant, non-causal policy with perfect knowledge of the system and future disturbances. In the full-information Linear-Quadratic Regulator (LQR) setting, RO controllers minimize the worst-case regret across all bounded energy disturbances \citep{sabag2021regret, goel2023regret}. Furthermore, the infinite-horizon RO controller admits a state-space form, rendering this approach amenable to efficient real-time computation \citep{sabag2021regret}.

Extensions of this framework have been explored for the measurement-feedback setting \citep{ROMF, hajar_regret-optimal_2023},  the dynamic environment setting \citep{goel_regret-optimal_2021-1}, safety critical control \citep{martin2022safe,didier2022system}, filtering \citep{sabag_regret-optimal_2022, goel2023regret}, and distributed control \citep{martinelli_closing_2023}. While these controllers closely track the performance of the non-causal controller in the worst-case disturbance setting, they can, however, become overly conservative when dealing with stochastic disturbances.
\paragraph{Distributionally Robust (DR) Control.}This framework, on the other hand, addresses uncertainty in system dynamics and disturbances by considering ambiguity sets, \ie, a set of plausible probability distributions, rather than considering a single distribution as in $H_2$ or worst-case realization of disturbances as in $H_\infty$ and RO control \citep{yang2020wasserstein,kim_distributional_2021, hakobyan2022wasserstein, tacskesen2023distributionally, aolaritei_wasserstein_2023,aolaritei_capture_2023}.
This approach seeks to design controllers that perform well across all probability distributions of disturbances within an ambiguity set.
The size of the ambiguity set allows one to control the amount of desired robustness against distributional uncertainty so that, unlike $H_\infty$ and RO controllers, the resulting controller is not overly conservative.

While various distributional mismatch measures like total variation \citep{tzortzis_dynamic_2014,tzortzis_robust_2016} and KL divergence \citep{liu_data-driven_2023} are considered in DR control, ambiguity sets are commonly chosen as Wasserstein-2 balls around a nominal distribution due to computational tractability \citep{mohajerin_esfahani_data-driven_2018,gao_distributionally_2022}. Therefore, this approach provides a tractable means to bridge the gap between the realms of stochastic and adversarial uncertainties. 

\subsection{Contributions}
In this work, we consider the Wasserstein-2 distributionally robust regret-optimal (DR-RO) control framework introduced by \cite{DRORO} for the full-information LQR setting and extended by \cite{hajar_wasserstein_2023} to the partial-observability one. DR-RO control aims to design controllers that minimize the worst-case expected regret across all distributions chosen adversarially within a Wasserstein-2 ambiguity set. We summarize our contributions as follows.
\paragraph{Stabilizing Infinite-Horizon Controller.} Rather than the finite-horizon setting prevalent in the DR control literature \citep{hakobyan2022wasserstein, tacskesen2023distributionally,aolaritei_capture_2023,DRORO,hajar_wasserstein_2023}, we focus on the infinite-horizon DR-RO control in the full-information LQR setting. 
Thus, we are able to provide long-term stability and robustness guarantees.

\paragraph{Robustness to Arbitrarily Correlated Disturbances.} Unlike several prior works which assume time-independence of the disturbances \citep{yang2020wasserstein,kim_distributional_2021, hakobyan2022wasserstein, tacskesen2023distributionally, zhong2023nonlinear,aolaritei_wasserstein_2023,aolaritei_capture_2023}, we do not impose such assumptions so that the resulting controllers are robust against time-correlated and non-Gaussian disturbances, thus better capturing distributional robustness.

\paragraph{Computationally Efficient Controller Synthesis.}
Leveraging a strong duality result, we obtain the exact Karush-Khun-Tucker (KKT) conditions for the worst-case distribution and the optimal causal controller.
We show that, although the resulting controller is non-rational, \ie, it does not admit a finite state-space form, it does admit a non-linear finite-dimensional parametric form.
We exploit this parametric structure and provide a computationally efficient numerical method to compute the optimal DR-RO controller in the frequency-domain via fixed-point iterations.
Prior works focus on finite horizon problems (see \citep{DRORO,hajar_wasserstein_2023, tacskesen2023distributionally}) and are hampered by the fact that they require solving a semi-definite program (SDP) whose size scales with the time horizon. This prohibits their applicability when the time horizon is large. Our approach enables efficient implementation of the infinite-horizon DR-RO controller. 

Note that a recent work \cite{brouillon2023distributionally} studies the constrained infinite-horizon DR control problem under partial information by considering time-correlated disturbances, providing stability guarantees and reducing the problem to a finite convex program. However, unlike our method, this work assumes order $T$ stationarity for the stochastic process, formulates a stationarity control problem, and employs ambiguity sets centered at nominal empirical distributions (a data-driven approach, also seen in \citep{yang2020wasserstein, kim_distributional_2021}).

\section{{Preliminaries and Problem Setup}} \label{sec:prelim}
\subsection{Notations}

From hereon, calligraphic letters ($\K$, $\M$, $\L$, etc.) denote operators. $\I$ is the identity operator. 
Sans serif type letters ($\x$, $\u$, $\w$, etc.) denote infinite sequences. Boldface letters ($\Kfin$, $\CCfin$, $\mathbf{w}$, etc.) denote matrices with finite-horizon.
Asterisk $\M^\ast$ denotes the adjoint of $\M$ and 
$\psdg$ denotes the positive-definite ordering.
$\P(\cdot)$ denotes the space of probability measures over a domain.
$\causal$ stands for {(Hardy-2)} space of causal block Toeplitz operators. 
$\Tr$ is the normalized trace function over block Toeplitz operators such that {$\Tr(\I) = p$}, and $\tr$ is the trace of matrices. 
$\norm[\op]{\cdot}$ and $\norm[F]{\cdot}$ are the operator ($H_\infty$) and Frobenius 
($H_2$) norms for operators, respectively. $\norm{\cdot}$ is the Euclidean norm of vectors.
$\{\M\}_{+}$ and $\{\M\}_{-}$ denote the causal and strictly anti-causal parts of an operator $\M$. 

\subsection{A Linear Dynamical System}
We consider a discrete-time linear time-invariant (LTI) dynamical system expressed in its state-space representation as follows:
\begin{equation}\label{eq:state_space}
\begin{aligned}
    x_{t+1} &= A x_{t} + B_u u_{t} + B_w w_{t+1},
\end{aligned}
\end{equation}
Here, $x_{t} \in \R^{n}$ denotes the state, $u_t \in \R^{d}$ the control input, 
and $w_{t} \in \R^{p}$ the exogenous disturbance process. We assume that $(A,B_u)$ and $(A,B_w)$ are stabilizable.

In the rest of this paper, we adopt an operator-theoretic representation of system dynamics~\eqref{eq:state_space}. To this end, we denote by $\x \defeq \{x_{t}\}_{t\in \Z }$, $\u \defeq \{u_{t}\}_{t\in \Z }$, and $\w \defeq \{w_{t}\}_{t\in \Z }$ the bi-infinite state, control input and disturbance sequences, respectively. For a finite-horizon index set $\II_T \defeq \{\-T, \-T\+1,\dots T\-1, T \}$ with $T>0$, we adopt the notation $\xfin_T \defeq \{x_{t}\}_{t\in \II_T }$, $\ufin_T \defeq \{u_{t}\}_{t\in \II_T }$, and $\wfin_T \defeq \{w_{t}\}_{t\in \II_T }$ to denote the finite-horizon counterparts. Using these definitions, we can represent the infinite-horizon system dynamics~\eqref{eq:state_space} equivalently in operator notation as
\vspace{-0mm}
\begin{equation}\label{eq:operator_form}
    \vspace{-0mm}
    \x = \F \u + \G \w, 
\end{equation}
where $\F$ and $\G$ are bi-infinite strictly causal (\ie, strictly lower triangular) and causal (\ie, lower triangular) time-invariant block Toeplitz operators, respectively, corresponding to the dynamics~\eqref{eq:state_space}. We use $\Ffin_T$ and $\Gfin_T$ to denote the finite-horizon counterparts of $\F$ and $\G$ for the interval $\II_T$.
\paragraph{Controller.} In this paper, we consider linear time-invariant (LTI) disturbance feedback control (DFC) policies $\K: \w \to \u $ in the form
\vspace{-0mm}
\begin{equation}\label{eq:feedback_control}
    \vspace{-0mm}
    \u = \K \w.
\end{equation}
Here, $\mathcal{K} \in \causal$ stands for the controller, a causal and time-invariant block Toeplitz operator mapping past disturbance realizations to control inputs. We define the closed-loop transfer operator as
\vspace{-0mm}
\begin{equation} \label{eq:T_K}
\vspace{-0mm}
    \T_{\K} : \w \mapsto \begin{bmatrix} \x \\ \u \end{bmatrix} \defeq \begin{bmatrix} \F \K + \G \\ \K \end{bmatrix} \w,
\end{equation}
which maps the disturbances to the regulated output and the control input of the system~\eqref{eq:state_space} under a fixed control policy $\K$. We similarly adopt the notations $\Kfin_T$ and $\Tfin_{\Kfin,T}$ to respectively denote the finite-horizon controller and closed-loop transfer matrix for the interval $\II_T$.
\paragraph{Cost.} We assume that the cumulative cost incurred by a control policy $\Kfin_T$ within the time interval $\II_T$ for the disturbance realization {$\wfin_T$} is given by:
\vspace{-0mm}
\begin{equation} \label{eq:cost}
\vspace{-0mm}
    \cost_T(\Kfin_T, {\wfin_T}) \defeq \suml_{t\in \II_T} x_t^\tp Q s_t + u_t^\tp R u_t,
\end{equation}
where $Q,R \psdg 0$. By redefining $x_{t} \leftarrow Q^{\half} x_{t}$ and $u_{t} \leftarrow R^{\half} u_{t}$, we can rewrite the cumulative cost~\eqref{eq:cost} in terms of the closed-loop transfer operator~\eqref{eq:T_K} as $\cost_T(\Kfin_T, {\wfin_T}) = {\wfin_T^\ast {\Tfin_{\Kfin,T}^\ast \Tfin_{\Kfin,T}} \wfin_T}$.  

\subsection{The Regret Measure}
In the full-information setting, it is well-known that there exists a unique optimal non-causal policy, $\K_\circ$, defined as
\vspace{-4mm}
\begin{align} \label{eq:noncausal}
    \vspace{-4mm}
    \K_\circ \defeq - (\I + \F^\ast \F)^{\-1}\F^\ast \G,
\end{align}
that minimizes the infinite-horizon cost, $\lim_{T\to \infty} \frac{1}{\abs{\II_T}} \cost_{T}(\Kfin_T,{\wfin_T})$, and a 
a unique optimal non-causal policy, $\Kfin_{\circ,T}$, defined as
\vspace{-4mm}
\begin{align} \label{eq:noncausal_finite}
    \vspace{-4mm}
    \Kfin_{\circ,T} \defeq - (I + \Ffin_T^\ast \Ffin_T)^{\-1}\Ffin_T^\ast \Gfin_T,
\end{align}
that minimizes the finite-horizon cost \eqref{eq:cost}
for all bounded power disturbance realizations \citep{blackbook, sabag2021regret}. Since a non-causal controller is physically unrealizable, we aim to design a causal control policy that performs as best as the optimal non-causal policy $\K_\circ$, which has access to the entire disturbance trajectory at the outset. 
 
To quantify the disparity in accumulated costs between a causal controller and the optimal non-causal controller , $\Kfin_{\circ,T}$, we define the regret as 
\vspace{-0mm}
\begin{equation}
\begin{aligned}\label{eq:regret}
\vspace{-0mm}
    \regret_T(\Kfin_T,{\wfin_T}) &\defeq \cost_T(\Kfin_T,\wfin_T) \- \cost_T(\Kfin_{\circ,T}, {\wfin_T}) \\
    &= \wfin_T^\ast \pr{\Tfin_{\Kfin,T}^\ast \Tfin_{\Kfin,T} \- \Tfin_{\Kfin_\circ,T}^\ast \Tfin_{\Kfin_\circ,T}} \wfin_T.
\end{aligned}
\end{equation}
Put differently, regret measures the excess cost that a causal controller suffers as a result of not foreseeing the realization of future disturbances. 

In the regret-optimal control framework, the objective is to design a causal controller minimizing the worst-case regret among all {bounded energy} disturbances, formulated as follows:
\begin{problem}[Regret-Optimal Control \citep{sabag2021regret}]\label{prob:regret_optimal}
Find a causal control policy, $\K$, that minimizes the time-averaged {worst-case regret} as $T\to\infty$, \ie, 
\begin{align} \label{eq:regret_optimal}
    \inf_{\K \in \causal} { \lim_{T\to \infty}  \frac{1}{\abs{\II_T}}   \sup_{ {\norm{\w_{T}}^2 \leq 1}} \regret_T(\Kfin_T,\wfin_T)}.
\end{align}
\end{problem}
By leveraging the time-invariance of the dynamics~\eqref{eq:state_space} and the controller~\eqref{eq:feedback_control}, problem~\eqref{eq:regret_optimal} can be equivalently reframed as $\inf_{\K \in \causal} \norm[\op]{ \T_{\K}^\ast \T_{\K} - \T_{\K_\circ}^\ast \T_{\K_\circ}},$
which can be solved by reducing it to a Nehari problem \citep{sabag2021regret}. The resulting controller closely mirrors the non-causal controller's performance under worst-case disturbance but may be overly conservative in stochastic disturbance scenarios.

\subsection{Distributionally Robust Regret-Optimal Control}
This paper explores the distributionally robust regret-optimal control approach, aiming to design a causal controller minimizing the worst-case expected regret within an \emph{ambiguity set} of probability distributions of disturbances. The ambiguity set during the time interval $\II_T$ is described as a Wasserstein-2 ball of radius $r\sqrt{\abs{\II_T}}$ centered around a nominal probability distribution $\Pr_\circ \in \Prob(\R^{p\abs{\II_T}})$, \ie, 
\vspace{-0mm}
\begin{equation}\label{eq:ambiguity_set}
\vspace{-0mm}
    \W_T \defeq \cl{\Pr \in \Prob(\R^{p\abs{\II_T}})  \mid  \Was(\Pr,\, \Pr_\circ) \leq r\sqrt{\abs{\II_T}}}.
\end{equation}
Here, the Wasserstein-2 distance is defined as
\vspace{-0mm}
\begin{equation}\label{eq:wasserstein}
\vspace{-0mm}
    \Was(\Pr_1,\Pr_2)^2 \defeq {\inf_{\pi \in \Pi(\Pr_1,\Pr_2)} \int \norm{w_1 - w_2}^2 \,\pi(dw_1,dw_2)} ,
\end{equation}
where the set $\Pi(\Pr_1,\Pr_2)$ consists of all joint distributions with marginals $\Pr_1$ and $\Pr_2$ \citep{villani_optimal_2009, wassOT2}.
 
In \cite{DRORO,hajar_wasserstein_2023}, the worst-case expected regret incurred by a causal controller $\Kfin_T$ during the time interval $\II_T$ is given by 
\vspace{-0mm}
\begin{equation}\label{eq:finite_horizon_regret}
\vspace{-0mm}
    \sup_{\Pr \in \W_T} \E_{\Pr}\br{\regret_T(\Kfin_T, {\wfin_T})}
\end{equation}
where $\E_{\Pr}$ denotes the expectation under the distribution $\Pr$ such that ${\wfin_T} \!\sampled\! \Pr$. Using this formulation in the finite-horizon, we define the worst-case expected regret in infinite-horizon as follows:
\begin{definition}[Worst-Case Expected Regret]\label{def:worst_case_regret}
The time-averaged worst-case expected regret suffered by a causal control policy, $\K \in \causal$, over an infinite horizon is given by  
\vspace{-0mm}
\begin{equation} \label{eq:worst_case_regret}
\vspace{-0mm}
R(\K, {\w})\defeq\lim_{T\to \infty}  \frac{1}{\abs{\II_T}}    \sup_{\Pr \in \W_T} \E_{\Pr}\br{\regret_T(\Kfin_T, {\wfin_T})}.
\end{equation}  
\end{definition}
Using this definition, we formally cast the infinite-horizon DR-RO control problem as follows:
\begin{problem}[Distributionally Robust Regret-Optimal (DR-RO) Control ]\label{prob:DR-RO}
    Find a causal control policy, $\K$, that minimizes the time-averaged worst-case expected regret~\eqref{eq:worst_case_regret} as $T\to\infty$, \ie,
    \begin{align} \label{eq:DR-RO}
    \vspace{-0mm}
    \inf_{\K \in \causal} { \lim_{T\to \infty}  \frac{1}{\abs{\II_T}}    \sup_{\Pr \in \W_T} \E_{\Pr}\br{\regret_T(\Kfin_T, {\wfin_T})}}.
\end{align}
\end{problem}
In Section~\ref{sec:kkt}, we provide an equivalent formulation to Problem~\ref{prob:DR-RO} in terms of the closed-loop transfer operator by appealing to strong duality.

\section{Main Theoretical Results} \label{sec:kkt}

In this section, we present our main theorems. In Theorem~\ref{thm:strong_duality}, we first establish a strong duality reformulation for the infinite-horizon worst-case expected regret in operator form. Exploiting the dual formulation, we reduce solving Problem~\ref{prob:DR-RO} into solving a suboptimal Problem~\ref{prob:suboptimal_DR_RO}. In Theorem~\ref{thm:suboptimal_DR_RO}, we present the suboptimal controller and argue that it is stabilizing. Due to space constraints, we defer the proofs of our theorems to Appendix~\ref{appdx:proofs}. 

\subsection{Reduction to a suboptimal Problem via Strong Duality}
In the finite-horizon DR-RO problem, Theorem 2 in \cite{DRORO} establishes an equivalent formulation for the worst-case expected regret~\eqref{eq:finite_horizon_regret} as a single-parameter optimization problem via strong duality. In Theorem~\ref{thm:strong_duality}, we establish an analogous dual reformulation for the infinite-horizon worst-case expected regret~\eqref{eq:worst_case_regret} as a single-parameter search problem. For ease of notation and clarity of results, we make the following assumption.
\vspace{-3mm}
\begin{assumption}\label{asmp:nominal}
     For any finite-horizon interval $\II_T$, the nominal distribution, $\w_{\circ,T}\!\sampled\!\Pr_\circ$, is absolutely continuous wrt the Lebesgue measure with $\E_{\Pr_\circ}[\w_{\circ,T} \w_{\circ,T}^\ast] = I$.
     \vspace{-3mm}
\end{assumption}
\begin{theorem}[Strong Duality for \eqref{eq:worst_case_regret}] \label{thm:strong_duality}
Let $\CC_{\K} \defeq \T_{\K}^\ast \T_{\K} - \T_{\K_\circ}^\ast \T_{\K_\circ}$ and $\K \in \causal$ be a a causal and time-invariant policy. Under  assumption~\ref{asmp:nominal}, the infinite-horizon worst-case expected regret~\eqref{eq:worst_case_regret} incurred by $\K$ attains a finite value and is equivalent to the following dual problem:
\begin{equation}\label{eq:dual_wosrt_case_regret}
\vspace{-2mm}
    \inf_{\gamma \geq 0}   \gamma (r^2 - \Tr{\I} ) +\gamma  \Tr{ ( \I - \gamma^{\-1}\CC_{\K} )^{\-1}}\quad  \textrm{s.t.} \quad  \gamma \I \psdg \CC_{\K}.
\end{equation}
Furthermore, the worst-case disturbance, $\w_\star$, can be identified from the nominal disturbance, $\w_\circ$, as $\w_\star = ( \I - \gamma_\star^{\-1}\CC_{\K})^{\-1} \w_{\circ}$ where $\gamma_\star$ is the optimal solution to~\eqref{eq:dual_wosrt_case_regret}, which satisfies the following:
\begin{equation}\label{eq:worst_gamma}
\vspace{-2.5mm}
        \Tr{(( \I - \gamma_\star^{\-1}\CC_{\K})^{\-1} - \I)^2} = r^2.
\end{equation}    
\end{theorem}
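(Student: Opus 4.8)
The plan is to derive \eqref{eq:dual_wosrt_case_regret} by first solving the inner supremum in \eqref{eq:worst_case_regret} at each finite horizon $T$ and then taking the limit, exploiting the fact that the regret functional is \emph{quadratic} in the disturbance. For fixed $T$ and fixed causal $\Kfin_T$, the map $\wfin_T\mapsto\regret_T(\Kfin_T,\wfin_T)=\wfin_T^\ast\CCfin_{\Kfin,T}\wfin_T$ with $\CCfin_{\Kfin,T}\defeq\Tfin_{\Kfin,T}^\ast\Tfin_{\Kfin,T}-\Tfin_{\Kfin_\circ,T}^\ast\Tfin_{\Kfin_\circ,T}$ is a (possibly indefinite) quadratic form, and the inner problem is a Wasserstein-penalized expectation of this quadratic over the ball $\W_T$ of radius $r\sqrt{|\II_T|}$ around $\Pr_\circ$. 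I would invoke the finite-horizon strong-duality result already available in the literature (Theorem~2 of \cite{DRORO}), which says precisely that
\[
\sup_{\Pr\in\W_T}\E_\Pr[\regret_T(\Kfin_T,\wfin_T)]
=\inf_{\gamma\geq 0}\ \gamma\bigl(r^2|\II_T| - \tr(I)\cdot|\II_T|^{?}\bigr)+\gamma\,\tr\bigl((I-\gamma^{-1}\CCfin_{\Kfin,T})^{-1}\bigr),
\]
subject to $\gamma I\psdg\CCfin_{\Kfin,T}$, with the worst-case law being a linear pushforward of $\Pr_\circ$ by $(I-\gamma^{-1}\CCfin_{\Kfin,T})^{-1}$; here Assumption~\ref{asmp:nominal} ($\Pr_\circ$ absolutely continuous, $\E[\wfin_T\wfin_T^\ast]=I$) is exactly what makes the Gelbrich/Wasserstein bound tight and the trace terms come out as stated. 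The constraint $\gamma I\psdg\CCfin_{\Kfin,T}$ is automatically satisfiable for $\gamma$ large since $\CCfin_{\Kfin,T}$ is bounded.

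Next I would divide by $|\II_T|$ and pass to the limit $T\to\infty$. The key structural input is time-invariance: $\F,\G,\K,\K_\circ$ are block Toeplitz, so $\CC_\K\defeq\T_\K^\ast\T_\K-\T_{\K_\circ}^\ast\T_{\K_\circ}$ is block Toeplitz, and the finite sections $\CCfin_{\Kfin,T}$ are its $|\II_T|\times|\II_T|$ truncations. By the Szeg\H{o}-type limit theorem for (block) Toeplitz operators, for any continuous function $f$ analytic on a neighborhood of the spectrum of $\CC_\K$,
\[
\lim_{T\to\infty}\frac{1}{|\II_T|}\tr\bigl(f(\CCfin_{\Kfin,T})\bigr)=\Tr\bigl(f(\CC_\K)\bigr),
\]
where $\Tr$ is the normalized trace of Toeplitz operators (integration of the symbol over the circle, normalized so $\Tr(\I)=p$). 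Applying this with $f(z)=(1-\gamma^{-1}z)^{-1}$ turns $\tfrac{1}{|\II_T|}\tr((I-\gamma^{-1}\CCfin_{\Kfin,T})^{-1})\to\Tr((\I-\gamma^{-1}\CC_\K)^{-1})$ and $\tfrac{1}{|\II_T|}\tr(I)\to\Tr(\I)$, yielding the objective in \eqref{eq:dual_wosrt_case_regret}; the constraint $\gamma I\psdg\CCfin_{\Kfin,T}$ for all $T$ is equivalent to $\gamma\I\psdg\CC_\K$ (sup of finite-section spectra equals operator norm side of $\CC_\K$). To interchange the limit with $\inf_{\gamma\ge0}$ I would argue uniform convergence of the (convex in $\gamma$) objectives on compact $\gamma$-intervals bounded away from $\|\CC_\K\|_{\op}$, plus a coercivity estimate forcing the optimal $\gamma$ to stay in such a compact set, giving $\Gamma$-convergence of the finite-horizon dual problems to \eqref{eq:dual_wosrt_case_regret}. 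Finiteness of the value then follows because $\CC_\K$ is a bounded operator, so $(\I-\gamma^{-1}\CC_\K)^{-1}$ is bounded for $\gamma>\|\CC_\K\|_{\op}$.

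Finally, for the worst-case disturbance characterization: at each $T$ the optimal coupling in \eqref{eq:wasserstein} is given by the optimal transport map $\wfin_{\circ,T}\mapsto(I-\gamma_T^{-1}\CCfin_{\Kfin,T})^{-1}\wfin_{\circ,T}$, and the optimality condition for $\gamma_T$ is that the transport cost hits the budget, i.e. $\tfrac{1}{|\II_T|}\tr\bigl(((I-\gamma_T^{-1}\CCfin_{\Kfin,T})^{-1}-I)^2\bigr)=r^2$ (this is $\Was(\Pr_T^\star,\Pr_\circ)^2/|\II_T|=r^2$ written out, using $\E[\wfin_{\circ,T}\wfin_{\circ,T}^\ast]=I$). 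Passing to the limit via the same Szeg\H{o} theorem with $f(z)=((1-\gamma^{-1}z)^{-1}-1)^2$ gives \eqref{eq:worst_gamma}, and the maps converge to $\w_\star=(\I-\gamma_\star^{-1}\CC_\K)^{-1}\w_\circ$. The main obstacle I anticipate is the limit interchange: making rigorous that $\tfrac{1}{|\II_T|}$ times the finite-horizon dual converges to the infinite-horizon dual \emph{as optimization problems} (not just pointwise in $\gamma$), which requires the Szeg\H{o} limit theorem in the block-Toeplitz setting together with equicontinuity/coercivity in $\gamma$ near the boundary $\gamma=\|\CC_\K\|_{\op}$ — this is where Assumption~\ref{asmp:nominal} and the stabilizability hypotheses (ensuring $\CC_\K$ is a genuinely bounded Toeplitz operator with well-defined symbol) do the real work.
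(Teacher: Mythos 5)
Your proposal follows essentially the same route as the paper: invoke the finite-horizon strong duality of Theorem~2 in \cite{DRORO}, normalize by $\abs{\II_T}$, use Szeg\H{o}-type limit theorems for block Toeplitz matrices to get pointwise convergence of the dual objectives, and then upgrade pointwise convergence to convergence of the infima and of the optimal $\gamma$ via a $\Gamma$-convergence/coercivity argument (the paper implements exactly this, using a recovery-sequence construction based on convexity of $f(\gamma,\lambda)$ for the limsup inequality and a Fatou lemma with varying measures for the liminf inequality), with the worst-case distribution and the equation \eqref{eq:worst_gamma} obtained by passing the finite-horizon transport map and budget condition to the limit.

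One step in your sketch is stated too strongly and needs repair: the matrices $\CCfin_{\Kfin,T}$ are \emph{not} the finite sections of the Toeplitz operator $\CC_\K$, because the finite-horizon regret is defined against the finite-horizon clairvoyant controller $\Kfin_{\circ,T}$ of \eqref{eq:noncausal_finite}, which is not the truncation of the infinite-horizon $\K_\circ$ (likewise the finite-horizon Cholesky factor of $\Ifin_T+\Ffin_T^\ast\Ffin_T$ is not the truncation of $\Delta$). So the Szeg\H{o} theorem cannot be applied to $\CCfin_{\Kfin,T}$ directly; the paper bridges this by first showing $\CCfin_{\Kfin,T}$ is \emph{asymptotically equivalent} (in Gray's sense, $\frac{1}{\abs{\II_T}}\tr\abs{\CCfin_{\Kfin,T}-\CC_{\K,T}}\to 0$) to the genuine Toeplitz truncation $\CC_{\K,T}$, and only then invokes the block-Szeg\H{o} weak convergence of spectral measures. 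This is a fixable but genuine missing step: without it, the claimed convergence $\frac{1}{\abs{\II_T}}\tr\,f(\CCfin_{\Kfin,T})\to\Tr f(\CC_\K)$ and the identification of the constraint set (and hence the whole limit-interchange argument you flag as the main obstacle) are not justified.
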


Using the dual problem~\eqref{eq:dual_wosrt_case_regret}, we can rewrite the DR-RO problem~\eqref{eq:DR-RO} as 
\begin{equation}\label{eq:DR-RO_reformulation}
\vspace{-1.5mm}
    \inf_{\K \in \causal} \inf_{\gamma \geq 0}   \gamma (r^2 - \Tr{\I} ) +\gamma  \Tr{ ( \I - \gamma^{\-1}\CC_{\K} )^{\-1}} \quad  \textrm{s.t.} \quad  \gamma \I \psdg \CC_{\K}.
    \vspace{-1.5mm}
\end{equation}
By exchanging the infima and fixing $\gamma$, we can first find a suboptimal solution $\K_{\gamma}$ to \eqref{eq:DR-RO_reformulation}. Using the suboptimal solutions $\K_\gamma$, we can search for the optimal $\gamma_\star$ by solving equation~\eqref{eq:worst_gamma}. Therefore, we restrict our attention to the suboptimal DR-RO problem stated below.
\begin{problem}[Suboptimal DR-RO Control] \label{prob:suboptimal_DR_RO}
For a fixed $\gamma >{\gamma}_{\textrm{RO}}\defeq \inf_{\K\in\causal} \norm[\op]{\CC_\K}$, find a causal control policy, $\K_{\gamma}$, that minimizes the suboptimal objective function \eqref{eq:dual_wosrt_case_regret}  i.e.,
    \begin{equation}\label{eq:sub_optimal}
       \inf_{\K \in\causal } \Tr{ ( \I - \gamma^{\-1}\CC_{\K} )^{\-1}} \quad \textrm{s.t.} \quad \gamma \I \psdg \CC_{\K}.
\end{equation}
\end{problem}
\vspace{-1.5mm}
\begin{remark}\label{remark:limiting_r}
Note that as $r\to\infty$, the optimal $\gamma_\star$  approaches the lower bound $\norm[\op]{\CC_\K}$, \ie, the worst-case expected regret~\eqref{eq:worst_case_regret} reaches to the worst-case regret as in \cite{sabag2021regret} and the optimal DR-RO controller recovers the optimal RO controller. The optimal regret, ${\gamma}_{\textrm{RO}}$, acts as a global lower bound on $\gamma$. Conversely, as $r\to 0$, $\gamma_\star \to \infty$, leading the worst-case expected regret~\eqref{eq:worst_case_regret} to nominal expected regret and the optimal DR-RO controller recovers the optimal $H_2$ controller. Adjusting $r$ enables the DR-RO controller to interpolate between the RO and $H_2$ controllers.
\end{remark}

\subsection{Solution for the Suboptimal Problem~\ref{prob:suboptimal_DR_RO}}
In its present form, Problem~\ref{prob:suboptimal_DR_RO} is challenging since the controller appears both in an operator inverse, as well as in the constraint $\gamma\I\!\psdg\!\CC_\K$. An alternative formulation via Fenchel duality follows.

\begin{lemma}[Duality for the Suboptimal Problem~\ref{prob:suboptimal_DR_RO}]\label{thm:dual_suboptimal}
Let $\gamma >{\gamma}_{\textrm{RO}}$ be fixed and let assumption~\ref{asmp:nominal} hold. The $\gamma$-optimal DR-RO control Problem~\ref{prob:suboptimal_DR_RO} is equivalent to the following dual problem
\begin{equation}\label{eq:suboptimal_prob}
    \sup_{\M \psdg 0} \inf_{\K \in\causal} - \Tr(\M) + 2\Tr(\sqrt{\M}) + \gamma ^\inv  \Tr(\CC_{\K} \M ).
\end{equation}
\end{lemma}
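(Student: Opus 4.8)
The plan is to derive the dual problem~\eqref{eq:suboptimal_prob} from~\eqref{eq:sub_optimal} by introducing a variational representation of the operator inverse and then exchanging the resulting infima/suprema. The key observation is that for a self-adjoint operator $\X\psdg 0$ the map $\X\mapsto\Tr(\X^{-1})$ admits the variational form $\Tr(\X^{-1})=\sup_{\M\psdg 0}\,\bigl(-\Tr(\M\X)+2\Tr(\sqrt{\M})\bigr)$, with the supremum attained at $\M=\X^{-2}$; this is just the (operator) Fenchel conjugate of the convex function $\X\mapsto\Tr(\X^{-1})$ on the positive cone, applied to the normalized trace $\Tr$ over block Toeplitz operators. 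Applying this with $\X=\I-\gamma^{-1}\CC_\K$, which is positive definite precisely under the constraint $\gamma\I\psdg\CC_\K$, rewrites the objective of~\eqref{eq:sub_optimal} as $\inf_{\K\in\causal}\sup_{\M\psdg 0}\,-\Tr(\M)+\gamma^{-1}\Tr(\CC_\K\M)+2\Tr(\sqrt{\M})$, since $\Tr(\M(\I-\gamma^{-1}\CC_\K))=\Tr(\M)-\gamma^{-1}\Tr(\CC_\K\M)$.

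The remaining step is to justify swapping the order of $\inf_{\K}$ and $\sup_{\M}$. First I would record that the inner objective $(\K,\M)\mapsto -\Tr(\M)+2\Tr(\sqrt{\M})+\gamma^{-1}\Tr(\CC_\K\M)$ is concave in $\M$ (the term $2\Tr(\sqrt{\M})$ is operator-concave and the rest is affine) and, crucially, convex in $\K$: since $\T_\K=\begin{bmatrix}\F\K+\G\\ \K\end{bmatrix}$ is affine in $\K$ and $\M\psdg 0$, the quadratic map $\K\mapsto\Tr(\T_\K^\ast\T_\K\M)$ is convex, hence $\K\mapsto\Tr(\CC_\K\M)$ is convex (the subtracted $\T_{\K_\circ}^\ast\T_{\K_\circ}$ term is constant in $\K$). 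With convex-concave structure on the convex set $\causal\times\{\M\psdg 0\}$, one invokes a minimax theorem (Sion's theorem, or the version for operator-valued problems used in the finite-horizon analysis of~\cite{DRORO}); some care is needed because the feasible set for $\K$ in the primal~\eqref{eq:sub_optimal} is only the sublevel set $\{\gamma\I\psdg\CC_\K\}$ rather than all of $\causal$, but after introducing $\M$ this constraint is absorbed — on the set where $\I-\gamma^{-1}\CC_\K$ fails to be positive definite the inner $\sup_\M$ is $+\infty$, so the constrained and unconstrained problems coincide. I would then confirm the needed compactness/coercivity: $-\Tr(\M)+2\Tr(\sqrt{\M})$ is coercive to $-\infty$ in $\M$, which lets one restrict $\M$ to a bounded (hence weakly compact) set, supplying the hypothesis required by the minimax theorem.

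I expect the main obstacle to be the functional-analytic rigor of the minimax exchange in the infinite-dimensional, operator-valued setting: verifying that $\Tr(\sqrt{\M})$ is well-behaved (finite, weakly upper semicontinuous, concave) on the relevant class of operators, that the normalized trace $\Tr$ interacts correctly with the Fenchel conjugation identity, and that Sion's theorem applies with the correct topology on $\causal$ (e.g.\ the $H_\infty$ or $H_2$ topology). A clean way to sidestep some of this is to first establish the identity and the minimax swap at each finite horizon $T$ — where everything reduces to finite matrices and the argument is the standard Fenchel-dual computation from~\cite{DRORO} — and then pass to the limit $T\to\infty$ using the time-invariance and the existing infinite-horizon duality machinery from Theorem~\ref{thm:strong_duality}. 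Once the exchange is justified, the stated form~\eqref{eq:suboptimal_prob} follows immediately by writing the outer problem as $\sup_{\M\psdg 0}\inf_{\K\in\causal}$.
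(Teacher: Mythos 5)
Your proposal follows essentially the same route as the paper's proof: the Fenchel-dual variational identity $\Tr(\clf{X}^{-1})=\sup_{\M\psdg 0}\bigl(-\Tr(\clf{X}\M)+2\Tr(\sqrt{\M})\bigr)$ applied with $\clf{X}=\I-\gamma^{-1}\CC_\K$ (which simultaneously absorbs the constraint $\gamma\I\psdg\CC_\K$ via the $+\infty$ branch), followed by a minimax exchange justified by the concave-convex structure of the objective. Your additional remarks on coercivity, compactness, and the infinite-dimensional rigor of the swap are more detailed than the paper's one-line appeal to the minimax theorem, but they do not change the argument.
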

The concave-convex problem~\eqref{eq:suboptimal_prob} is more manageable since the inner minimization wrt $\K\in\causal$ can be solved via the Wiener-Hopf technique \citep{kailath_linear_2000}.

Introducing the spectral factorization $\Delta^\ast \Delta = \I + \F^\ast \F$ with causal and causally invertible $\Delta$, we present our second main result in Theorem~\ref{thm:suboptimal_DR_RO}, the solution to the suboptimal DR-RO Problem~\ref{prob:suboptimal_DR_RO}. 
\begin{theorem}[suboptimal DR-RO Controller]\label{thm:suboptimal_DR_RO}
The $\gamma$-sub-optimal DR-RO controller $\K_\gamma$ of DR-RO Problem~\ref{prob:suboptimal_DR_RO} coincides with the saddle point $(\K_{\gamma},\M_{\gamma})$ of the dual problem~\eqref{eq:suboptimal_prob}. Furthermore, let $\L_{\gamma}$ denote the causal and causally invertible spectral factor of $\M_\gamma$ such that $\M_{\gamma} = \L_{\gamma} \L_{\gamma}^\ast$. Then 
$(\K_{\gamma},\M_{\gamma})$ uniquely satisfies the following set of equations:
\begin{align}\label{eq:suboptimal_kkt}
      \K_{\gamma} =  \Delta^\inv\cl{\Delta \K_\circ \L_{\gamma}}_{\!+} \L_{\gamma}^\inv, \quad \text{and} \quad \L_{\gamma}^\ast \L_{\gamma} =  \frac{1}{4}\pr{\I \+ \sqrt{\I + 4 \gamma^\inv \{\Delta\K_{\circ}\L_{\gamma}\}_{\!-}^\ast  \{\Delta\K_{\circ}\L_{\gamma}\}_{\!-}  }}^2.
\end{align}
\end{theorem}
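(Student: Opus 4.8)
The plan is to analyze the concave--convex dual program~\eqref{eq:suboptimal_prob} of Lemma~\ref{thm:dual_suboptimal} and to read off~\eqref{eq:suboptimal_kkt} from the two partial optimality conditions at a saddle point $(\K_\gamma,\M_\gamma)$; by Lemma~\ref{thm:dual_suboptimal} such a saddle point identifies the $\gamma$-suboptimal controller. The one identity I need up front is that $\CC_\K$ is a perfect square in the innovation $\K-\K_\circ$. Expanding $\T_\K^\ast\T_\K=\K^\ast(\I+\F^\ast\F)\K+\K^\ast\F^\ast\G+\G^\ast\F\K+\G^\ast\G$, substituting $\Delta^\ast\Delta=\I+\F^\ast\F$ and $\F^\ast\G=-(\I+\F^\ast\F)\K_\circ=-\Delta^\ast\Delta\K_\circ$ (which is~\eqref{eq:noncausal}), and completing the square gives $\T_\K^\ast\T_\K=\bigl(\Delta(\K-\K_\circ)\bigr)^\ast\Delta(\K-\K_\circ)-(\Delta\K_\circ)^\ast(\Delta\K_\circ)+\G^\ast\G$; subtracting the $\K=\K_\circ$ instance yields $\CC_\K=\bigl(\Delta(\K-\K_\circ)\bigr)^\ast\Delta(\K-\K_\circ)$. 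Hence $\Tr(\CC_\K\M)=\norm[F]{\Delta(\K-\K_\circ)\L}^2$ for any factor $\M=\L\L^\ast$, so the objective in~\eqref{eq:suboptimal_prob} is strictly convex in $\K$ and, since $2\Tr\sqrt{\M}$ is strictly concave, strictly concave in $\M$.

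\textbf{Inner minimization (Wiener--Hopf) and the first equation.} Since $(\K_\gamma,\M_\gamma)$ is a saddle point, $\K_\gamma$ minimizes $h(\cdot,\M_\gamma)$; writing $\M_\gamma=\L_\gamma\L_\gamma^\ast$ with $\L_\gamma$ causal and causally invertible, this amounts to minimizing $\norm[F]{\Delta\K\L_\gamma-\Delta\K_\circ\L_\gamma}^2$ over $\K\in\causal$. Because $\Delta$, $\L_\gamma$ and their inverses are causal and bounded, $\mathcal Y\mapsto\Delta^\inv\mathcal Y\L_\gamma^\inv$ is a bijection of $\causal$ onto itself, so this is the orthogonal projection, in the $H_2$ inner product $\langle X,Y\rangle=\Tr(X^\ast Y)$, of $\Delta\K_\circ\L_\gamma$ onto $\causal$; the minimizer is $\Delta\K_\gamma\L_\gamma=\{\Delta\K_\circ\L_\gamma\}_{+}$, i.e. $\K_\gamma=\Delta^\inv\{\Delta\K_\circ\L_\gamma\}_{+}\L_\gamma^\inv$, the first equation of~\eqref{eq:suboptimal_kkt} (it is insensitive to the constant-unitary freedom in $\L_\gamma$). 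This also gives $\Delta(\K_\gamma-\K_\circ)=-\{\Delta\K_\circ\L_\gamma\}_{-}\L_\gamma^\inv$, hence $\CC_{\K_\gamma}=\L_\gamma^{-\ast}\{\Delta\K_\circ\L_\gamma\}_{-}^\ast\{\Delta\K_\circ\L_\gamma\}_{-}\L_\gamma^\inv$.

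\textbf{Outer maximization and the second equation.} Next, $\M_\gamma$ maximizes $h(\K_\gamma,\cdot)=-\Tr(\M)+2\Tr(\sqrt{\M})+\gamma^\inv\Tr(\CC_{\K_\gamma}\M)$ over $\M\psdg 0$, with $\CC_{\K_\gamma}$ a fixed operator; this is a coercive concave program whose maximizer is interior and positive definite (feasibility $\gamma\I\psdg\CC_{\K_\gamma}$ makes $\I-\gamma^\inv\CC_{\K_\gamma}$ a bounded positive operator), so the first-order condition $-\I+\M_\gamma^{-1/2}+\gamma^\inv\CC_{\K_\gamma}=0$ holds, where $\partial_\M\Tr\sqrt{\M}=\thalf\M^{-1/2}$ is justified by passing to symbols via $\Tr f(\M)=\tfrac{1}{2\pi}\int_{-\pi}^{\pi}\tr f(\widehat{\M}(\e^{j\omega}))\,d\omega$. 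Thus $\M_\gamma^{-1/2}=\I-\gamma^\inv\CC_{\K_\gamma}$. Multiplying on the left by $\L_\gamma^\ast$, on the right by $\L_\gamma$, invoking the polar-decomposition identity $\L^\ast(\L\L^\ast)^{-1/2}\L=(\L^\ast\L)^{1/2}$, and inserting the formula for $\CC_{\K_\gamma}$ from the previous step, I obtain $(\L_\gamma^\ast\L_\gamma)^{1/2}=\L_\gamma^\ast\L_\gamma-\gamma^\inv\{\Delta\K_\circ\L_\gamma\}_{-}^\ast\{\Delta\K_\circ\L_\gamma\}_{-}$; solving $X-X^{1/2}=\gamma^\inv\{\Delta\K_\circ\L_\gamma\}_{-}^\ast\{\Delta\K_\circ\L_\gamma\}_{-}$ for $X=\L_\gamma^\ast\L_\gamma$ (the branch $X^{1/2}\psdg\I$ being forced by positivity of the right-hand side) gives exactly the second equation of~\eqref{eq:suboptimal_kkt}.

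\textbf{Existence, uniqueness, stability, and the main obstacle.} It remains to produce the saddle point. Problem~\ref{prob:suboptimal_DR_RO}, $\inf_{\K\in\causal}\Tr((\I-\gamma^\inv\CC_\K)^{-1})$ subject to $\gamma\I\psdg\CC_\K$, is convex in $\K$ (the map $\K\mapsto(\I-\gamma^\inv\CC_\K)^{-1}$ is operator-convex) with objective coercive and diverging as $\CC_\K$ approaches $\gamma\I$; since $\gamma>\gamma_{\textrm{RO}}$ its feasible set is nonempty, so a minimizer $\K_\gamma$ exists with $\gamma\I\psdg\CC_{\K_\gamma}$ strictly, and strong duality (Lemma~\ref{thm:dual_suboptimal}) shows $(\K_\gamma,\M_\gamma)$ with $\M_\gamma=(\I-\gamma^\inv\CC_{\K_\gamma})^{-2}$ is a saddle point of~\eqref{eq:suboptimal_prob}; uniqueness follows from the strict convexity/concavity noted above. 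Stability is then immediate: $\K_\gamma\in\causal$ and $\F,\G$ are stable, so $\T_{\K_\gamma}$ is a bounded causal operator. The hard part will be this last step done rigorously in infinite dimensions---attainment of the minimizer and weak-compactness of sublevel sets over operator spaces, plus differentiability of $\Tr\sqrt{\cdot}$ and of the operator inverse in the normalized-trace setting; once those analytic points are secured, the manipulations in the first three steps are routine.
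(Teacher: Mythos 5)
Your proposal follows essentially the same route as the paper: Fenchel duality (Lemma~\ref{thm:dual_suboptimal}), the Wiener--Hopf projection argument for the inner minimization giving $\Delta\K_\gamma\L_\gamma=\{\Delta\K_\circ\L_\gamma\}_{+}$, the stationarity condition $\M_\gamma^{-1/2}=\I-\gamma^\inv\CC_{\K_\gamma}$, and the identity $\L^\ast(\L\L^\ast)^{-1/2}\L=(\L^\ast\L)^{1/2}$ followed by the positive-branch selection, which is exactly the paper's derivation (the paper resolves the branch via the completed square and an eigenvalue-magnitude argument, yours via positivity of $\L^\ast\L-( \L^\ast\L)^{1/2}$ --- equivalent). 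The additional remarks on existence, uniqueness, and infinite-dimensional rigor go slightly beyond what the paper itself formalizes and do not change the argument.
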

The proof of Theorem~\ref{thm:suboptimal_DR_RO}, given in Appendix~\ref{appdx:proofs}, is built upon the KKT conditions for \eqref{eq:suboptimal_prob} and the Wiener-Hopf technique \citep{kailath_linear_2000}.
Note that for $\gamma >{\gamma}_{\textrm{RO}}$, the worst-case expected regret \eqref{eq:worst_case_regret} is finite, which allows us to present Corollary \ref{thm:stabilizable}.

\begin{corollary}\label{thm:stabilizable}
    For any fixed $\gamma >{\gamma}_{\textrm{RO}}$, the suboptimal controller $\K_\gamma$ stabilizes the system dynamics.  
\end{corollary}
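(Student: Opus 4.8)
The plan is to show that $\K_\gamma$ induces a \emph{bounded} (hence stable) closed-loop transfer operator $\T_{\K_\gamma}$, and then to invoke the standard correspondence between stable closed-loop maps and internally stabilizing controllers in the disturbance-feedback parametrization. The starting observation is that the strict constraint in Problem~\ref{prob:suboptimal_DR_RO} is satisfied at the optimizer: since $\gamma > {\gamma}_{\textrm{RO}} = \inf_{\K\in\causal}\norm[\op]{\CC_\K}$, there is a feasible $\K$ with $\gamma\I\psdg\CC_\K$ and finite objective, so the infimum in \eqref{eq:sub_optimal} is finite and, by Theorem~\ref{thm:suboptimal_DR_RO}, it is attained at the saddle point $\K_\gamma$; in particular $\gamma\I\psdg\CC_{\K_\gamma}$, so $\CC_{\K_\gamma}$ is bounded above by $\gamma\I$.

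The key step is to upgrade this one-sided bound to two-sided boundedness of $\CC_{\K_\gamma}$. By definition $\CC_{\K_\gamma} = \T_{\K_\gamma}^\ast\T_{\K_\gamma} - \T_{\K_\circ}^\ast\T_{\K_\circ} \psdgeq -\T_{\K_\circ}^\ast\T_{\K_\circ}$, and $\T_{\K_\circ}^\ast\T_{\K_\circ}$ is a bounded operator: using \eqref{eq:noncausal} one obtains the closed form $\T_{\K_\circ}^\ast\T_{\K_\circ} = \G^\ast(\I+\F\F^\ast)^{-1}\G$, whose boundedness is exactly well-posedness of the underlying LQR problem (finiteness of the optimal non-causal cost on bounded-power disturbances), which follows from stabilizability of $(A,B_u)$ together with $Q,R\psdg 0$. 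Combining the lower bound with $\CC_{\K_\gamma}\psdleq\gamma\I$ shows $\CC_{\K_\gamma}$ is a bounded self-adjoint operator, hence so is $\T_{\K_\gamma}^\ast\T_{\K_\gamma} = \CC_{\K_\gamma} + \T_{\K_\circ}^\ast\T_{\K_\circ}$; therefore $\T_{\K_\gamma}$ is a bounded causal block-Toeplitz operator, i.e.\ a stable transfer operator. Reading off the two block rows of \eqref{eq:T_K}, both $\K_\gamma$ and $\F\K_\gamma+\G$ are stable. (The same conclusion can be cross-checked from the representation \eqref{eq:suboptimal_kkt}, in which $\Delta,\Delta^{-1}$ and $\L_\gamma,\L_\gamma^{-1}$ are stable by construction.)

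To finish, I would invoke internal stability of the interconnection \eqref{eq:state_space} under a causal DFC law: because $(A,B_u)$ and $(A,B_w)$ are stabilizable, a policy $\u=\K\w$ for which both closed-loop maps $\w\mapsto\u$ (namely $\K$) and $\w\mapsto\x$ (namely $\F\K+\G$) are stable is internally stabilizing — the achievability relation $\x=\F\u+\G\w$ built into $\T_\K$ guarantees that no unstable mode is left uncancelled, so every bounded-power (resp.\ $\ell_2$) disturbance produces a bounded-power (resp.\ $\ell_2$) state and input. Applying this with $\K=\K_\gamma$ yields the claim.

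\textbf{Main obstacle.} The delicate point is the last step when $A$ is only stabilizable, not stable: then $\F$ and $\G$ are themselves unbounded, so $\F\K_\gamma+\G$ must be handled as the \emph{closed-loop} operator $\T_{\K_\gamma}$ rather than term by term, and one must argue rigorously that stability of the closed-loop operator genuinely certifies internal stability of the feedback loop (no hidden unstable pole--zero cancellation in the factorization $\K_\gamma=\Delta^{-1}\{\Delta\K_\circ\L_\gamma\}_+\L_\gamma^{-1}$). The cleanest remedies are either to pre-stabilize the plant and carry out the argument for the transformed system, or to phrase internal stability directly through the closed-loop maps of the system-level parametrization; in both cases the substantive work is the boundedness of $\T_{\K_\gamma}$ established above, and what remains is the routine translation ``stable closed-loop operator $\Rightarrow$ internally stabilizing controller.'' A secondary item is to confirm boundedness of $\T_{\K_\circ}^\ast\T_{\K_\circ}$ in the unstable-$A$ case, which is part of the standing well-posedness of the LQR/regret problems.
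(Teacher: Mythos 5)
Your argument is correct, and it is somewhat more explicit than what the paper itself offers: the paper's entire justification is the remark preceding the corollary, namely that for $\gamma>\gamma_{\textrm{RO}}$ the worst-case expected regret \eqref{eq:worst_case_regret} is finite, which (together with finiteness of the non-causal benchmark cost $\Tr(\T_{\K_\circ}^\ast\T_{\K_\circ})$ under the nominal identity-covariance distribution) yields a finite $H_2$-type bound $\Tr(\T_{\K_\gamma}^\ast\T_{\K_\gamma})<\infty$ on the closed loop, hence bounded-power state and input. You instead certify stability from the hard constraint $\CC_{\K_\gamma}\psdleq\gamma\I$ (equivalently, from the KKT stationarity condition $\gamma^{-1}\CC_{\K_\gamma}=\I-\M_\gamma^{-1/2}\psdl\I$), which combined with boundedness of $\T_{\K_\circ}^\ast\T_{\K_\circ}=\G^\ast(\I+\F\F^\ast)^{-1}\G$ gives the stronger operator-norm ($H_\infty$-type) bound $\norm[\op]{\T_{\K_\gamma}}^2\leq\gamma+\norm[\op]{\T_{\K_\circ}}^2$. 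Both routes share the same structure — feasibility/finiteness at $\gamma>\gamma_{\textrm{RO}}$ plus well-posedness of the non-causal problem implies a bounded closed-loop transfer operator — and your version buys a uniform frequency-domain bound rather than only a trace bound. Two small remarks: the lower bound $\CC_{\K_\gamma}\psdgeq-\T_{\K_\circ}^\ast\T_{\K_\circ}$ is not needed, since $\T_{\K_\gamma}^\ast\T_{\K_\gamma}=\CC_{\K_\gamma}+\T_{\K_\circ}^\ast\T_{\K_\circ}\psdleq\gamma\I+\T_{\K_\circ}^\ast\T_{\K_\circ}$ already suffices; and the internal-stability caveat you flag for merely stabilizable $A$ is genuine but is equally left implicit by the paper, whose notion of ``stabilizes'' is precisely boundedness of the disturbance-feedback closed-loop maps $\w\mapsto(\x,\u)$, which you establish (and which, since $B_w w_{t+1}=x_{t+1}-Ax_t-B_uu_t$ makes the DFC implementation realizable from closed-loop signals, is the operative stability guarantee here).
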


\section{An Algorithm for Irrational Controller Synthesis} \label{sec:numeric}
In section \ref{subsec::subK}, we first show that, in the frequency domain, the KKT conditions~\eqref{eq:suboptimal_kkt} are uniquely determined by a finite-dimensional parameter, $\overline{B}_{\gamma}$. This allows us to argue that the sub-optimal controller is \emph{irrational}, and thus does not admit a finite-dimensional state-space realization. In section \ref{subsec:alg}, for any fixed $\gamma$, we propose a fixed-point iteration to find $\overline{B}_{\gamma}$ and thereby to compute the sub-optimal controller, $K_{\gamma}(\e^{j\omega})$. The optimal $\gamma_\star$, and thus the optimal DR-RO controller $K_{\gamma_\star}(\e^{j\omega})$, can be found by using the bisection method on equation~\eqref{eq:worst_gamma}.

\subsection{Finite-Dimensional Parametrization of the Sub-optimal Controller}\label{subsec::subK}

Defining $S_{\gamma,\-}(\e^{j\omega}) \defeq \{\Delta K_\circ L_\gamma\}_{\-}(\e^{j\omega})$, and $N_\gamma(\e^{j\omega}) \defeq {L_\gamma(\e^{j\omega})}^\ast L_\gamma(\e^{j\omega})$, and using the identity $\{\clf{X}\}_{\+}\=\clf{X}\-\{\clf{X}\}_{\-}$ , we restate the KKT equations~\eqref{eq:suboptimal_kkt} in the frequency domain as follows:
\begin{align}
    K_\gamma(\e^{j\omega}) &= K_{\circ}(\ejw)\- \Delta^\inv (\e^{j\omega}) S_{\gamma,\-}(\e^{j\omega}) L_\gamma^\inv (\e^{j\omega}), \label{eq:K_freq}\\
    N_\gamma(\e^{j\omega}) &= \frac{1}{4}\pr{I \+ \sqrt{I \+ 4 \gamma^\inv S_{\gamma,\-}^\ast(\e^{j\omega})  S_{\gamma,\-}(\e^{j\omega}) }}^{\!2} \label{eq:N_freq}
\end{align}

Furthermore, we define the LQR controller $K_{\textrm{lqr}}\!\defeq\!{(R\+B_u^\ast PB_u)}^{\inv}B^\ast_u P A$ and the closed-loop matrix $A_K\defeq A\-B_u K_{\textrm{lqr}}$ where $P \!\succ \!0$ is the unique stabilizing solution to the LQR Riccati equation $P=Q+A^\ast PA-A^\ast PB_u{(R+B_u^\ast PB_u)}^{-1}B_u^\ast PA.$

In Lemma~\ref{lemma:finiteBl}, we show that the strictly anticausal transfer function $S_{\gamma,\-}(\e^{j\omega})$ admits a finite-dimensional state-space representation.
\begin{lemma}\label{lemma:finiteBl}
Let $\overline{A}\defeq A^\ast_K$, $\overline{D}\defeq A^\ast_K P B_w$, and $\overline{C}\defeq -{(R + B_u^\ast P B_u)}^{-\ast/2}B^\ast_u$. We have that
\begin{equation}
    S_{\gamma,\-}(\e^{j\omega}) = \overline{C}(\e^{\-j\omega}I - \overline{A})^\inv \overline{B}_{\gamma}, \quad \text{where} \quad \overline{B}_{\gamma} \defeq \frac{1}{2\pi} \int_{0}^{2\pi} (I-\e^{j\omega} \overline{A})^\inv \overline{D} L_\gamma(\e^{j\omega}) d\omega.
\end{equation}
\end{lemma}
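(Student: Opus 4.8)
The plan is to peel off the strictly anticausal part of $\Delta\K_\circ$, show it already carries the finite realization $(\overline{A},\overline{C},\overline{D})$ given in the statement, and only then let the causal factor $\L_\gamma$ act on it; $\L_\gamma$ enters through an $\ell^2$-type pairing that collapses to the claimed integral. First I would rewrite $\Delta\K_\circ$: substituting $\K_\circ=-(\I+\F^\ast\F)^{\inv}\F^\ast\G$ and the factorization $\Delta^\ast\Delta=\I+\F^\ast\F$ gives $\Delta\K_\circ=-\Delta(\Delta^\ast\Delta)^{\inv}\F^\ast\G=-\Delta^{-\ast}\F^\ast\G$, a two-sided operator. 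Since $\L_\gamma$ is causal, $\{\Delta\K_\circ\}_{+}\L_\gamma$ is causal, hence $S_{\gamma,-}=\{\Delta\K_\circ\L_\gamma\}_{-}=\big\{\{\Delta\K_\circ\}_{-}\L_\gamma\big\}_{-}$, so it suffices to (i) compute $\{\Delta\K_\circ\}_{-}$ in closed form and (ii) multiply by $\L_\gamma$ and keep the strictly anticausal remainder.

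For (i) I would substitute finite-dimensional realizations: $\F$ and $\G$ are read off from \eqref{eq:state_space}, while $\Delta^{\inv}$ is the causal, causally invertible spectral factor whose state matrix is the LQR closed loop $A_K$, with feedthrough governed by $R_e\defeq R+B_u^\ast PB_u$ and output row by $K_{\textrm{lqr}}$. Carrying out the additive (causal plus strictly anticausal) decomposition of the two-sided rational operator $-\Delta^{-\ast}\F^\ast\G$ and simplifying repeatedly with the LQR Riccati relations $P=Q+A^\ast PA_K$ and $A_K^\ast P=A^\ast P(I+B_uR^{\inv}B_u^\ast P)^{\inv}$, the bookkeeping collapses to $\{\Delta\K_\circ\}_{-}(\e^{j\omega})=\overline{C}(\e^{-j\omega}I-\overline{A})^{\inv}\overline{D}$ with exactly the $\overline{A},\overline{C},\overline{D}$ of the statement. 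An equivalent, perhaps cleaner, route goes through the costate (Pontryagin) form of the clairvoyant control: writing the noncausal optimal costate as $p_t=Px_t+\xi_t$ produces the backward recursion $\xi_t=A_K^\ast\xi_{t+1}+A_K^\ast PB_w\,w_{t+1}$, whose stable (strictly anticausal) solution $\xi_t=\sum_{k\ge0}(A_K^\ast)^k\overline{D}\,w_{t+1+k}$ is, after composition with the spectral factor $\Delta$ (which replaces the control-weight inverse $R_e^{\inv}$ by the half-power in $\overline{C}$), precisely the transfer function above. I expect this identification to be the main obstacle: it is in essence a finite-horizon LQR smoothing computation in operator dress, and getting the anticausal parts and the Riccati cancellations right is where essentially all the work sits; Step 1 and part (ii) are formal by comparison.

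For (ii): since $P$ is the stabilizing Riccati solution, $A_K$ --- hence $\overline{A}=A_K^\ast$ --- is Schur stable, so $\{\Delta\K_\circ\}_{-}(\e^{j\omega})=\sum_{\ell\ge1}\e^{j\ell\omega}\,\overline{C}\,\overline{A}^{\ell-1}\overline{D}$ and $(I-\e^{j\omega}\overline{A})^{\inv}=\sum_{k\ge0}\e^{jk\omega}\overline{A}^{k}$ converge geometrically. Expanding the symbol of the causal factor as $L_\gamma(\e^{j\omega})=\sum_{m\ge0}L_m\e^{-jm\omega}$ and multiplying, the coefficient of $\e^{jn\omega}$ for $n\ge1$ equals $\overline{C}\,\overline{A}^{n-1}\big(\sum_{m\ge0}\overline{A}^{m}\overline{D}L_m\big)$, while the coefficients of $\e^{jn\omega}$ with $n\le 0$ are removed by $\{\cdot\}_{-}$; this yields $S_{\gamma,-}(\e^{j\omega})=\overline{C}(\e^{-j\omega}I-\overline{A})^{\inv}\overline{B}_\gamma$ with $\overline{B}_\gamma=\sum_{m\ge0}\overline{A}^{m}\overline{D}L_m$. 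Finally, inserting the Neumann series into $\frac{1}{2\pi}\int_0^{2\pi}(I-\e^{j\omega}\overline{A})^{\inv}\overline{D}L_\gamma(\e^{j\omega})\,d\omega$ and using that $\frac{1}{2\pi}\int_0^{2\pi}\e^{j(k-m)\omega}\,d\omega$ vanishes unless $k=m$ reproduces the same series, so $\overline{B}_\gamma$ equals the claimed integral and is a well-defined finite matrix. The interchange of summation and integration is legitimate because $\overline{A}$ is Schur and $\L_\gamma$ is a bounded spectral factor (so $\{L_m\}\in\ell^2$), whence $\sum_m\overline{A}^{m}\overline{D}L_m$ converges absolutely by Cauchy--Schwarz.
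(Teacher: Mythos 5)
Your proposal is correct and follows essentially the same route as the paper: isolate the strictly anticausal part of $\Delta\K_\circ$ as the rational function $\overline{C}(\e^{-j\omega}I-\overline{A})^{-1}\overline{D}$, note that the causal part of $\Delta\K_\circ$ multiplied by the causal $\L_\gamma$ contributes nothing to the strictly anticausal projection, and then extract $\overline{B}_\gamma=\sum_{m\ge 0}\overline{A}^{m}\overline{D}L_m$ via the same Neumann-series and Parseval/orthogonality computation the paper uses. The only divergence is that where you sketch a rederivation of the anticausal-part identity (through Riccati cancellations or the costate recursion), the paper simply imports it as Lemma 4 of \cite{sabag2021regret}.
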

Notice that the rhs of \eqref{eq:N_freq} for $N_\gamma(\ejw)$ involves the square-root of rational term $ S_{\gamma,\-}(\e^{j\omega})^\ast  S_{\gamma,\-}(\e^{j\omega})$. In general, square root does not preserve rationality. We thus get Corollary~\ref{thm:irrational}.
\begin{corollary}\label{thm:irrational}
    For any fixed $\gamma\in(\gamma_{\textrm{RO}},\infty)$, $N_\gamma(\ejw)$ and the suboptimal DR-RO controller, $K_\gamma(\e^{j\omega})$, are irrational. Thus, $K_\gamma(\e^{j\omega})$ does not admit a finite-dimensional state-space form.
\end{corollary}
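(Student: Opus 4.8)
The plan is to argue by contradiction: assume $N_\gamma(\ejw)$ is rational, derive from \eqref{eq:N_freq} that the square root term
$$\sqrt{I + 4\gamma^{\inv} S_{\gamma,\-}^\ast(\ejw) S_{\gamma,\-}(\ejw)}$$
must be rational, and then show that this forces a rational matrix function to be a perfect square of a rational matrix function in a way that the particular $S_{\gamma,\-}$ from Lemma~\ref{lemma:finiteBl} generically violates. First I would reduce to the scalar/spectral picture: by Lemma~\ref{lemma:finiteBl}, $G(\ejw) \defeq I + 4\gamma^{\inv} S_{\gamma,\-}^\ast(\ejw) S_{\gamma,\-}(\ejw)$ is a para-Hermitian rational matrix function, positive definite on the unit circle (since $S_{\gamma,\-}^\ast S_{\gamma,\-} \psdgeq 0$ and we add $I$). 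From \eqref{eq:N_freq}, $2\sqrt{N_\gamma} - I = \sqrt{G}$ as positive-definite matrix functions on the circle, so if $N_\gamma$ were rational then $H(\ejw) \defeq 2\sqrt{N_\gamma(\ejw)} - I$ would be a candidate rational square root of $G$; more precisely $H^2 = G$ on the circle, hence (by analytic continuation / identity theorem for rational functions, since both sides agree on the circle) as rational matrix functions $H(z)^2 = G(z)$ on all of $\C$.

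The core step is then to exhibit the obstruction: $G(z)$ has a zero of odd order, or a pole of odd order, at some point of $\C$, which is incompatible with $G = H^2$ for rational $H$ (a square has even order at every zero and pole). Concretely, $\det G(z)$ is a rational function whose zeros and poles I would locate using the state-space data $(\overline A, \overline B_\gamma, \overline C)$ from Lemma~\ref{lemma:finiteBl}: the poles of $S_{\gamma,\-}(z) = \overline C(z^{\inv} I - \overline A)^{\inv}\overline B_\gamma$ sit at the eigenvalues of $\overline A^{\inv}$ (equivalently, $S_{\gamma,\-}^\ast$ contributes poles at the eigenvalues of $\overline A = A_K^\ast$, which are strictly inside the disk because $A_K$ is the stable LQR closed loop), while $I$ contributes nothing there. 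At a simple pole $\lambda$ of $S_{\gamma,\-}^\ast S_{\gamma,\-}$ of the para-Hermitian product, $G$ has a pole, and tracking the rank-one residue structure of $\overline B_\gamma \overline B_\gamma^\ast$-type terms one sees the pole order in $\det G$ is odd (generically one) at each such $\lambda$ that is not cancelled. Since a perfect square $H^2$ must have even pole order there, we get the contradiction, so $N_\gamma$ is irrational; irrationality of $K_\gamma$ then follows immediately from \eqref{eq:K_freq}, since $L_\gamma$ (the spectral factor of $N_\gamma$) and $L_\gamma^{\inv}$ are irrational while $K_\circ$ and $\Delta^{\inv}$ are rational, and a rational function minus a product involving an irrational factor cannot be rational unless that factor drops out — which it does not, because $S_{\gamma,\-}$ is a nonzero rational function (nonzero as long as the disturbance genuinely couples through $B_w$, i.e. $\overline D \neq 0$, which we may assume, else the whole problem degenerates).

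I expect the main obstacle to be making the ``odd order'' claim fully rigorous rather than merely generic: one must rule out miraculous cancellations in $\det G$ (e.g. the rank-one residues conspiring to produce even order at every pole, or $G$ being a perfect square ``by accident'' for special system data). The cleanest route I would pursue is to argue at a single carefully chosen point — e.g. analyze the Smith–McMillan form of $G(z)$ at a pole $\lambda$ of $S_{\gamma,\-}^\ast$ that is a simple eigenvalue of $\overline A$ with $\overline C (\lambda^{\inv} I - \overline A)^{\inv}$ and $\overline B_\gamma$ both nonvanishing on the relevant eigenspace — and show the local McMillan degree contributed is odd, which is a pointwise statement immune to global cancellation. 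A secondary subtlety is that $\overline B_\gamma$ depends on the unknown $L_\gamma$, so one should phrase the nonvanishing condition intrinsically (via $\overline D \neq 0$ and $L_\gamma$ being causally invertible, hence nonsingular on the circle, from Theorem~\ref{thm:suboptimal_DR_RO}); with $L_\gamma$ invertible a.e. on the circle, $\overline B_\gamma = 0$ would force $\overline D = 0$, ruling out the degenerate case. Once the pole-order parity obstruction is in hand, the corollary is immediate.
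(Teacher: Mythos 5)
First, note what you are being compared against: the paper gives no formal proof of this corollary at all — it rests entirely on the remark preceding it, that the right-hand side of \eqref{eq:N_freq} involves the square root of the rational term $S_{\gamma,\-}^\ast S_{\gamma,\-}$ and that ``in general, square root does not preserve rationality.'' Your pole-order-parity strategy is exactly the natural way to turn that heuristic into an argument, so you are attempting strictly more than the paper does. One small repair before the main point: from rationality of $N_\gamma$ you cannot immediately conclude that $2\sqrt{N_\gamma}-I$ is rational (square roots of rational matrices need not be rational — that is the whole point of the corollary). The reduction is saved by expanding the square instead: with $G \defeq I+4\gamma^\inv S_{\gamma,\-}^\ast S_{\gamma,\-}$, equation \eqref{eq:N_freq} gives $4N_\gamma = I + 2\sqrt{G} + G$, hence $\sqrt{G} = 2N_\gamma - \tfrac{1}{2}(I+G)$, which is rational whenever $N_\gamma$ is; then $H\defeq\sqrt{G}$ satisfies $H^2=G$ on the circle and, by the identity theorem, everywhere.

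The genuine gap is the one you yourself flag and do not close: the claim that $\det G$ (or the local Smith--McMillan structure of $G$) has odd order at some pole is asserted only ``generically,'' and nothing in the sketch rules out cancellations for the actual data. This is not a cosmetic issue here, because $\overline{B}_\gamma$ is defined through the unknown fixed point $L_\gamma$, so hypotheses like ``$\overline{C}(\lambda^\inv I-\overline{A})^\inv$ and $\overline{B}_\gamma$ do not vanish on the relevant eigenspace'' cannot be verified a priori, and one must genuinely exclude the possibility that $G$ is accidentally a perfect rational square for particular $(A,B_u,B_w,\gamma)$. Indeed, some nondegeneracy assumption is unavoidable: if $\overline{D}=A_K^\ast P B_w=0$ (e.g.\ a deadbeat closed loop $A_K=0$, as happens already for $A=0$), then $S_{\gamma,\-}=0$, $N_\gamma=I$, and $K_\gamma=K_\circ$ is rational, so the corollary as literally stated fails — you correctly exclude this case, while the paper does not, but a complete proof must either carry such a hypothesis explicitly or derive the non-cancellation from it, which your plan does not yet do. Finally, your last step needs one more word in the MIMO case: from \eqref{eq:K_freq}, rationality of $K_\gamma$ only yields rationality of $S_{\gamma,\-}L_\gamma^\inv$, which forces $L_\gamma$ (hence $N_\gamma$) to be rational only if $S_{\gamma,\-}$ has full column rank as a rational matrix; in the scalar setting targeted by the algorithm this is immediate, but it should be stated. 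In short: your route is sound in outline and more honest than the paper's own justification, but the central odd-order/non-cancellation step remains a plan rather than a proof.
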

Even though $K_\gamma(\e^{j\omega})$ does not admit a finite-dimensional state-space form, Lemma~\ref{lemma:finiteBl} suggests a finite-dimensional {\em parametrization} of $N_\gamma(\e^{j\omega})$ through $\overline{B}_{\gamma}$. Theorem~\ref{thm:fixed-point} establishes that $\overline{B}_{\gamma}$ uniquely determines $N_\gamma(\e^{j\omega})$, and thus the suboptimal controller $K_\gamma(\e^{j\omega})$. 
\begin{theorem}[Fixed-Point Solution]\label{thm:fixed-point}
Fix $\gamma \> \gamma_{\textrm{RO}}$ and consider the following set of mappings:
\begin{align}
    &F_1: \overline{B} \mapsto S_{\-}(\ejw) \defeq \overline{C}(\e^{\-j\omega}I - \overline{A})^\inv \overline{B} \label{eq:F1} \\
    &F_{2,\gamma}: S_{\-}(\ejw) \mapsto N(\e^{j\omega}) \defeq \frac{1}{4}\pr{I \+ \sqrt{I \+ 4 \gamma^\inv S_{\-}^\ast(\ejw)  S_{\-}(\ejw) }}^{\!2}\label{eq:F2} \\
    &F_3: N(\ejw) \mapsto  L(\ejw), \quad F_4: L(\ejw) \mapsto  \overline{B}\defeq \frac{1}{2\pi} \int_{0}^{2\pi} (I-\e^{j\omega} \overline{A})^\inv \overline{D} L(\e^{j\omega}) d\omega.
\end{align}
where $F_3$ returns a unique spectral factor of $N(\ejw)>0$. The composition $F_4 \!\circ\! F_3 \!\circ\!  F_{2,\gamma} \!\circ\!  F_1:\overline{B} \mapsto \overline{B}$ admits a unique fixed-point $\overline{B}_{\gamma}$, and  $N_{\gamma}(\e^{j\omega})~\defeq~F_{2,\gamma} \!\circ\!  F_1 (\overline{B}_{\gamma})$  satisfies the KKT conditions~\eqref{eq:suboptimal_kkt}.
\end{theorem}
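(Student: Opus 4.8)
The plan is to establish existence and uniqueness of the fixed point by exhibiting the composition $\Phi_\gamma \defeq F_4 \circ F_3 \circ F_{2,\gamma} \circ F_1$ as a contraction on a suitable complete metric space of parameters $\overline{B}$, and then to verify that any fixed point of $\Phi_\gamma$ indeed yields a pair $(\K_\gamma,\M_\gamma)$ solving the KKT system~\eqref{eq:suboptimal_kkt}. First I would fix the domain: since $S_{\gamma,-}(\e^{j\omega})$ is a strictly anticausal transfer function driven by $\overline{B}$ through the stable realization $(\overline{A},\overline{B},\overline{C})$ of Lemma~\ref{lemma:finiteBl} (note $\overline{A}=A_K^\ast$ has spectral radius $<1$ since $A_K$ is the LQR-stable closed loop), the map $F_1$ is linear and bounded, and its image lies in a finite-dimensional-parametrized family of anticausal $H_2$ functions. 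I would restrict $\overline{B}$ to a closed ball $\mathcal{B}_\rho\defeq\{\overline{B}:\norm[F]{\overline{B}}\le\rho\}$ in the appropriate matrix space for a radius $\rho$ chosen large enough to be invariant under $\Phi_\gamma$ — this invariance is itself one of the steps, obtained from the a priori bound $\norm[\op]{\CC_\K}\le\gamma$ which, for $\gamma>\gamma_{\textrm{RO}}$, forces $N_\gamma$ and hence $L_\gamma$ to lie in an $H_\infty$-bounded set, and then pushing that bound through the integral in $F_4$.

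The core estimate is the Lipschitz analysis of each stage. $F_1$ is globally Lipschitz (linear, bounded operator from a finite-dimensional space). For $F_{2,\gamma}$, the key observation is the factor $\gamma^{-1}$: the map $X\mapsto \tfrac14(I+\sqrt{I+4\gamma^{-1}X^\ast X})^2$ has, on any bounded set, an operator-Lipschitz constant that scales like $O(\gamma^{-1})$, because $\sqrt{\,\cdot\,}$ is operator-Lipschitz away from $0$ (with the argument $I+4\gamma^{-1}X^\ast X\psdg I$ uniformly bounded below, so no singularity is encountered) and $X\mapsto X^\ast X$ is locally Lipschitz on the bounded set. For $F_3$, spectral factorization of a strictly-positive-definite rational-plus-correction spectral density is Lipschitz in $H_\infty$ (or $H_2$) norm on sets bounded away from $0$ and bounded above — this is the classical Lipschitz continuity of the spectral factorization map, which I would invoke citing the Wiener–Hopf / Riccati machinery in \citep{kailath_linear_2000}; the normalization "unique spectral factor" (causal, causally invertible, with positive leading coefficient) makes $F_3$ single-valued. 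Finally $F_4$ is again linear and bounded (integration of $(I-\e^{j\omega}\overline{A})^{-1}\overline{D}L(\e^{j\omega})$ against $d\omega/2\pi$, with $(I-\e^{j\omega}\overline{A})^{-1}$ uniformly bounded since $\rho(\overline{A})<1$). Composing, $\Phi_\gamma$ has Lipschitz constant bounded by $C\gamma^{-1}$ on $\mathcal{B}_\rho$ for a constant $C$ depending only on the system data $(A,B_u,B_w,Q,R)$ and $\rho$; hence for $\gamma$ large this is an immediate contraction. To cover the full range $\gamma>\gamma_{\textrm{RO}}$ rather than only large $\gamma$, I would instead argue uniqueness via the convexity/strict concavity structure of the dual problem~\eqref{eq:suboptimal_prob} — the saddle point $(\K_\gamma,\M_\gamma)$ of Theorem~\ref{thm:suboptimal_DR_RO} is unique, $\M_\gamma$ determines $L_\gamma$ (unique normalized spectral factor) hence $\overline{B}_\gamma$ via $F_4$, and conversely a fixed point reconstructs a saddle point; so the fixed point is unique wherever it exists, and existence for all $\gamma>\gamma_{\textrm{RO}}$ follows from Theorem~\ref{thm:suboptimal_DR_RO} directly (the KKT solution exists, producing a fixed point), with Banach's theorem giving the constructive contraction statement on the regime where $C\gamma^{-1}<1$.

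For the second assertion — that $N_\gamma\defeq F_{2,\gamma}\circ F_1(\overline{B}_\gamma)$ satisfies the KKT conditions~\eqref{eq:suboptimal_kkt} — I would simply unwind the definitions: $\overline{B}_\gamma$ being a fixed point means, reading the chain backwards through $F_1,F_{2,\gamma},F_3,F_4$, that the resulting $L_\gamma$ satisfies both the self-consistent identity $\overline{B}_\gamma=\tfrac{1}{2\pi}\int_0^{2\pi}(I-\e^{j\omega}\overline{A})^{-1}\overline{D}L_\gamma\,d\omega$ and $N_\gamma=L_\gamma^\ast L_\gamma$; combined with Lemma~\ref{lemma:finiteBl} this is exactly the frequency-domain form~\eqref{eq:N_freq} of the second KKT equation in~\eqref{eq:suboptimal_kkt}, and the first KKT equation for $\K_\gamma$ is then the definition~\eqref{eq:K_freq}. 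The main obstacle I anticipate is the Lipschitz continuity of the spectral factorization map $F_3$ in the relevant norm on the non-rational spectral densities $N_\gamma$ — ordinary rational spectral factorization Lipschitz bounds must be extended to the $H_\infty$ closure, which requires care with the domain (strict positivity uniformly, i.e. $N_\gamma\psdg cI$ for a uniform $c>0$ coming from the "$I+$" term in~\eqref{eq:N_freq}) and with which Banach space one factorizes in; getting a clean Lipschitz constant there, uniform over $\mathcal{B}_\rho$, is the delicate part, whereas the $O(\gamma^{-1})$ scaling that makes the contraction work is comparatively routine once that is in hand.
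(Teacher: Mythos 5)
Your operative argument is the same as the paper's: existence of the fixed point comes directly from the existence of the KKT/saddle-point solution of Theorem~\ref{thm:suboptimal_DR_RO} (read through Lemma~\ref{lemma:finiteBl}), and uniqueness comes from strict concavity of the dual~\eqref{eq:suboptimal_prob} in $\M$, which pins down $\M_\gamma$, hence the normalized causal, causally invertible spectral factor $L_\gamma$, hence $\overline{B}_\gamma$ through $F_4$ — exactly the route the paper takes, including the unwinding of definitions to check that a fixed point satisfies~\eqref{eq:suboptimal_kkt}. The additional Banach-contraction sketch is not in the paper and would, as you acknowledge, only cover large $\gamma$ and rest on an unproven Lipschitz property of spectral factorization for the irrational densities $N_\gamma$; since you do not rely on it for the stated range $\gamma>\gamma_{\textrm{RO}}$, this is harmless decoration rather than a gap.
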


\subsection{Algorithm Description}\label{subsec:alg}

Motivated by Theorem~\ref{thm:fixed-point}, we introduce Algorithm \ref{alg::1} to compute the suboptimal controller $K_\gamma(\ejw)$ at uniformly sampled points on the unit circle. We start the algorithm with an initial estimate of the parameter $\overline{B}_{\gamma}^{(0)}$ (line 0 in the algorithm). At the $n^\text{th}$ iteration, we construct the functions $S_{\gamma,\-}^{(n)}(\e^{j\omega})$ and $N_{\gamma}^{(n)}(\e^{j\omega})$ from $\overline{B}_{\gamma}^{(n)}$ using the mappings in \eqref{eq:F1}, \eqref{eq:F2}. We subsequently compute the the spectral factor $L_\gamma^{(n)}(\e^{j\omega})$ (line 3) at uniformly sampled points on the unit circle,
from which we compute the next iterate $\overline{B}_{\gamma}^{(n+1)}$ via numerical integration of $F_4$. Upon convergence upto a tolerance, we ascertain the \emph{suboptimal} controller $K_\gamma(\e^{j\omega})$ for a fixed $\gamma>\gamma_{\textrm{RO}}$ at every sampled frequency point using~\eqref{eq:K_freq}.

\paragraph{Spectral Factorization}
Since there is no general closed-form formula for spectral factorization of irrational spectra, we can compute $L_{\gamma}(\ejw)$ only at finitely many frequencies.
Focusing on single-input systems, we employ a discrete Fourier transform (DFT) based factorization method by \cite{rino_factorization_1970}, highlighted in Algorithm 2 in Appendix~\ref{appdx:spect}, to approximate  $L_{\gamma}(\ejw)$ at uniformly sampled points on the unit circle. This method, tailored for \emph{scalar} irrational functions, proves efficient as the associated error term, featuring as a multiplicative phase factor, rapidly diminishes with increasing number of samples.

\RestyleAlgo{ruled}
\SetKwComment{Comment}{/* }{ */}

\begin{algorithm}[ht] 
  \caption{\textit {Distributionally Robust Regret-Optimal Control}}\label{alg::1}
  \SetAlgoNlRelativeSize{0}
  {\tiny [0]} Fix $\gamma>\gamma_{\textrm{RO}}$; initialize: $\overline{B}_{\gamma}^{(0)}$, and select $2^k$ equally spaced values of $\omega\in (0,2\pi)$\;
  
  \For{$n\geq 0$}{
    {\tiny[1]} $S_{\gamma,-}^{(n)}(\mathrm{e}^{j\omega}) = \overline{C}(\mathrm{e}^{j\omega}I - \overline{A})^{-1} \overline{B}_{\gamma}^{(n)}$\;
    
    {\tiny [2]} $N_\gamma^{(n)}(\mathrm{e}^{j\omega}) =  \frac{1}{4}\left(I + \sqrt{I + 4 \gamma^{-1} S_{\gamma,-}^{(n)}(\mathrm{e}^{j\omega})^{*}  S_{\gamma,-}^{(n)}(\mathrm{e}^{j\omega})  }\right)^{2}$\;
    
    {\tiny[3]} $L_\gamma^{(n)}(\mathrm{e}^{j\omega}) = \text{Spectral Factorization}( N_\gamma^{(n)}(\mathrm{e}^{j\omega}))$\;
    
    {\tiny[4]} $\overline{B}_{\gamma}^{(n+1)}  = \frac{1}{2\pi} \int_{-\pi}^{\pi} (I-\mathrm{e}^{j\omega} \overline{A})^{-1} \overline{D} L_\gamma^{(n)}(\mathrm{e}^{j\omega}) d\omega$\;
    
    \If{$\norm{\overline{B}_{\gamma}^{(n+1)}-\overline{B}_{\gamma}^{(n)}}<\textrm{tol}$}{
      
      {\tiny[5]} Compute $K_\gamma(\e^{j\omega}) = K_{\circ}(\ejw)- \Delta^{-1} (\e^{j\omega}) S_{\gamma,-}(\e^{j\omega}) L_\gamma^{(n)} (\e^{j\omega})^{-1}$
      
      \textbf{Break}\;
    }
  }
  \vspace{-6mm}
\end{algorithm}

\section{{Experimental Results}} \label{sec:simul}

In this section, we showcase the applicability of the DR-RO controller, and its performance, compared to $ H_2$, $ H_\infty$, and regret-optimal controllers. We focus our investigation on a set of 4 diverse systems from \cite{aircraft}, and, in particular, use the chemical reactor system, [REA4], as our main benchmark. The system has 8 states and is SISO. We perform all experiments using MATLAB, on a Macbook Air with Apple M1 processor and 8 GB of RAM. We specify the nominal distribution as Gaussian, with zero mean and identity covariance. We investigate various values for the radius $r$, and for each solve the optimization problem using the algorithm outlined in section~\ref{sec:numeric}.

For the system [REA4], a comparative analysis of worst-case expected regret cost as defined in \eqref{def:worst_case_regret} is conducted against the $H_2$, $H_\infty$ \cite{blackbook}, and RO \cite{sabag2021regret} controllers, considering the unique worst-case distribution associated with each controller. The results are depicted in Figures \ref{fig:ER} and \ref{fig:perc}. We redo the analysis considering 3 other systems (described in \cite{aircraft}), and we show the results in Table 1 in Appendix~\ref{appdx:sim}. 
Another performance metric considered is the operator norm of $T_K$ minimized by the $H_\infty$ controller, which is expressed, in the frequency domain as: $\|\T_\K \|_{\op}^2 = \max_{0 \leq \omega \leq 2\pi} \sigma_{\max}( T_K^\ast(\ejw)T_K(\ejw))$. This metric is visualized across all frequencies in Figure \ref{fig:R}.

Figures \ref{fig:ER}, and \ref{fig:perc} emphasize the robust performance of the DR controller in minimizing worst-case expected regret under worst-case disturbance conditions for any given parameter $r$. Notably, the DR controller exhibits a versatile nature, closely mirroring the $H_2$ controller for smaller $r$ while converging towards the behavior of the RO controller for larger $r$. This dual capability underscores its adaptability to different robustness requirements, and aligns with the theoretical insights outlined in Remark~\ref{remark:limiting_r}. Moreover, in Figure \ref{fig:R}, the performance of the DR controller exhibits an interpolation between the $H_2$ and RO controllers \emph{across all frequencies}. 

Finally, we note that Algorithm \ref{alg::1}, coupled with the bisection technique, exhibits notable efficiency; the execution time is $5.8$ seconds for [REA4] system, for $r=0.79$.
This highlights the significance of our approach compared to other DR control methods that rely on an SDP that scales with the time-horizon and number of states (e.g., \cite{DRORO} and \cite{hajar_wasserstein_2023} could only address systems with smaller dimensions and a time-horizon of only 10 steps).

\begin{figure}[!ht]
\centering
\subfigure[]{
    \includegraphics[width=0.45\textwidth]{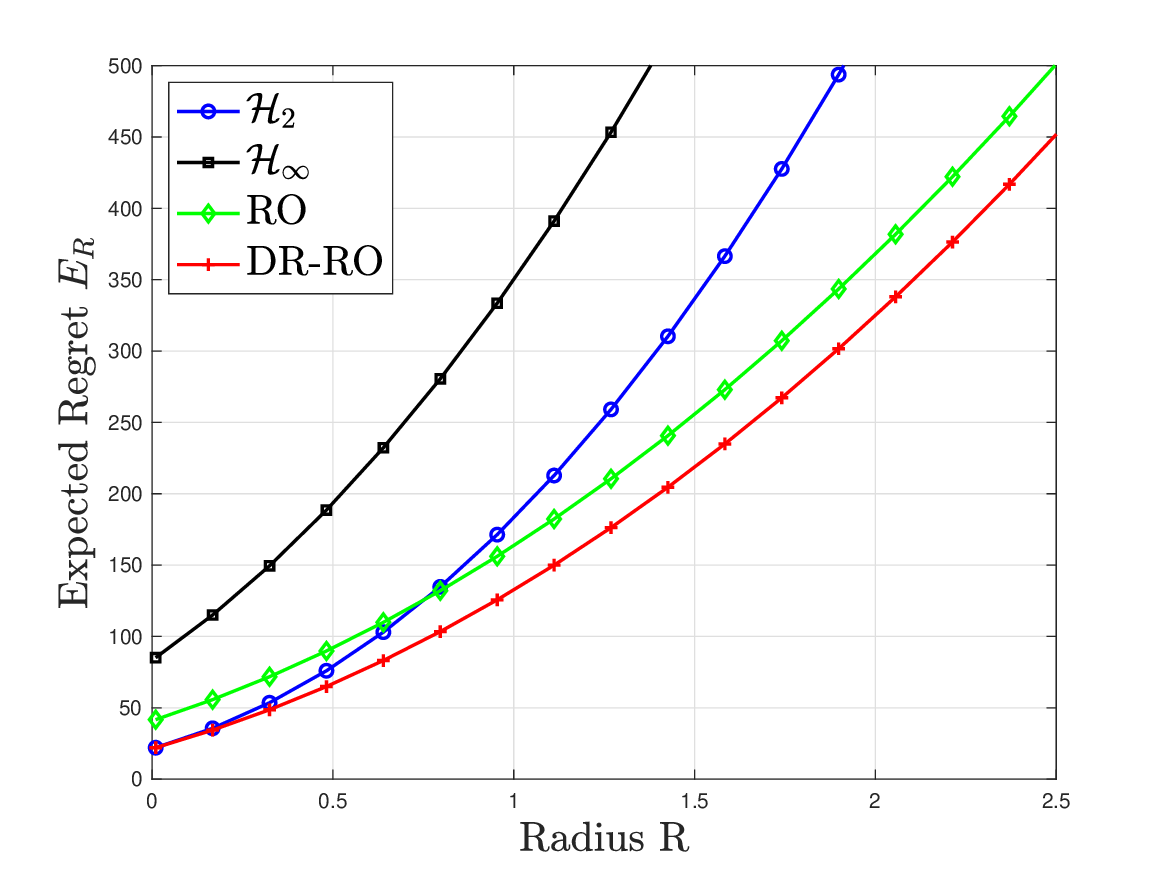}
    \label{fig:ER}
}
\hspace{0.5cm} 
\subfigure[]{
    \includegraphics[width=0.45\textwidth]{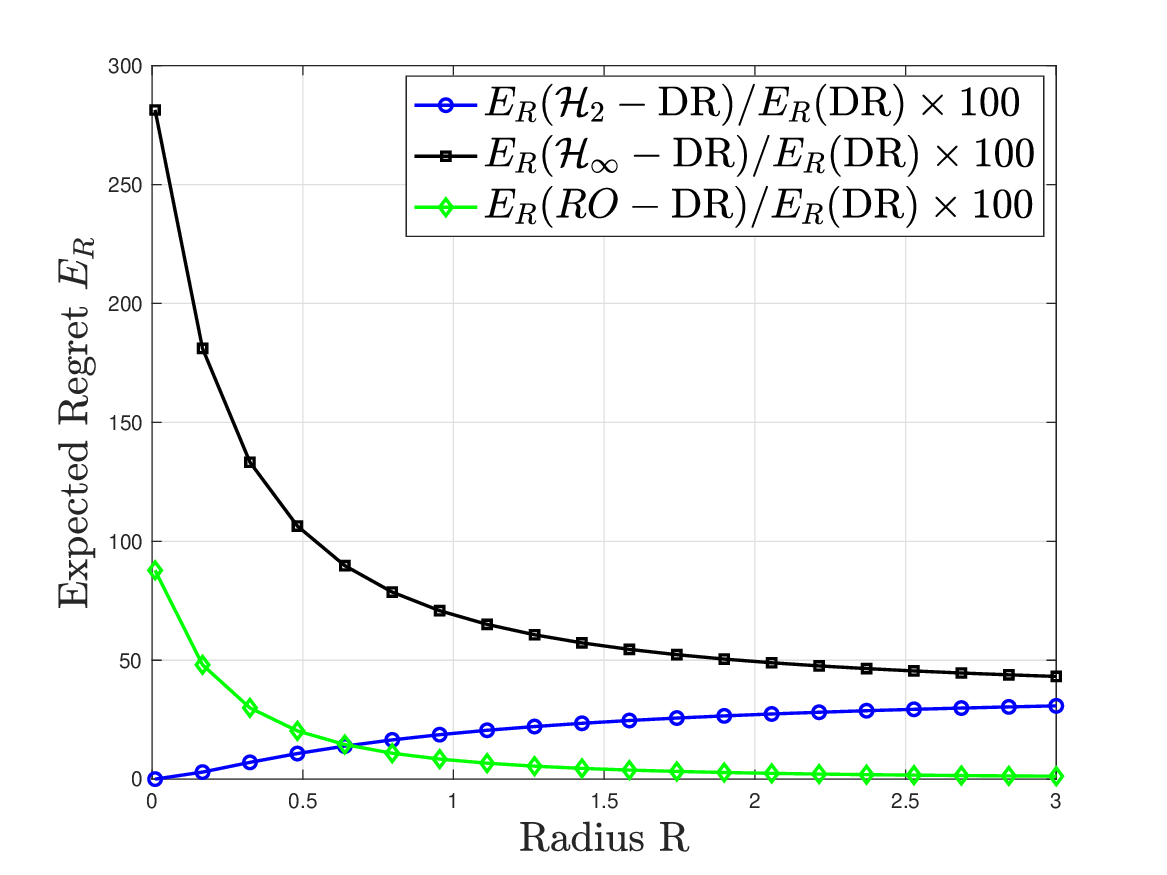}
    \label{fig:perc}
}
\vspace{-3mm}
\caption{(a) The worst-case expected regret cost of each controller for different values of $r$, for system [REA4]. (b) The percentage difference in the worst case regret relative to the DR-RO controller. (a) and (b) show that DR-RO minimizes the cost at all $r's$, and for small (large) $r$, the cost of DR-RO controller matches that of ${\cal H}_2$ (RO). 
The cost of the DR controller is less than that of $H_2$ and RO by $14.5\%$, and of $H_\infty$ by $89.7\%$ for $r=0.639$.}
\end{figure}
\begin{figure}[!ht]
  \vspace{-5mm}
    \centering
    \includegraphics[width=0.6\textwidth]{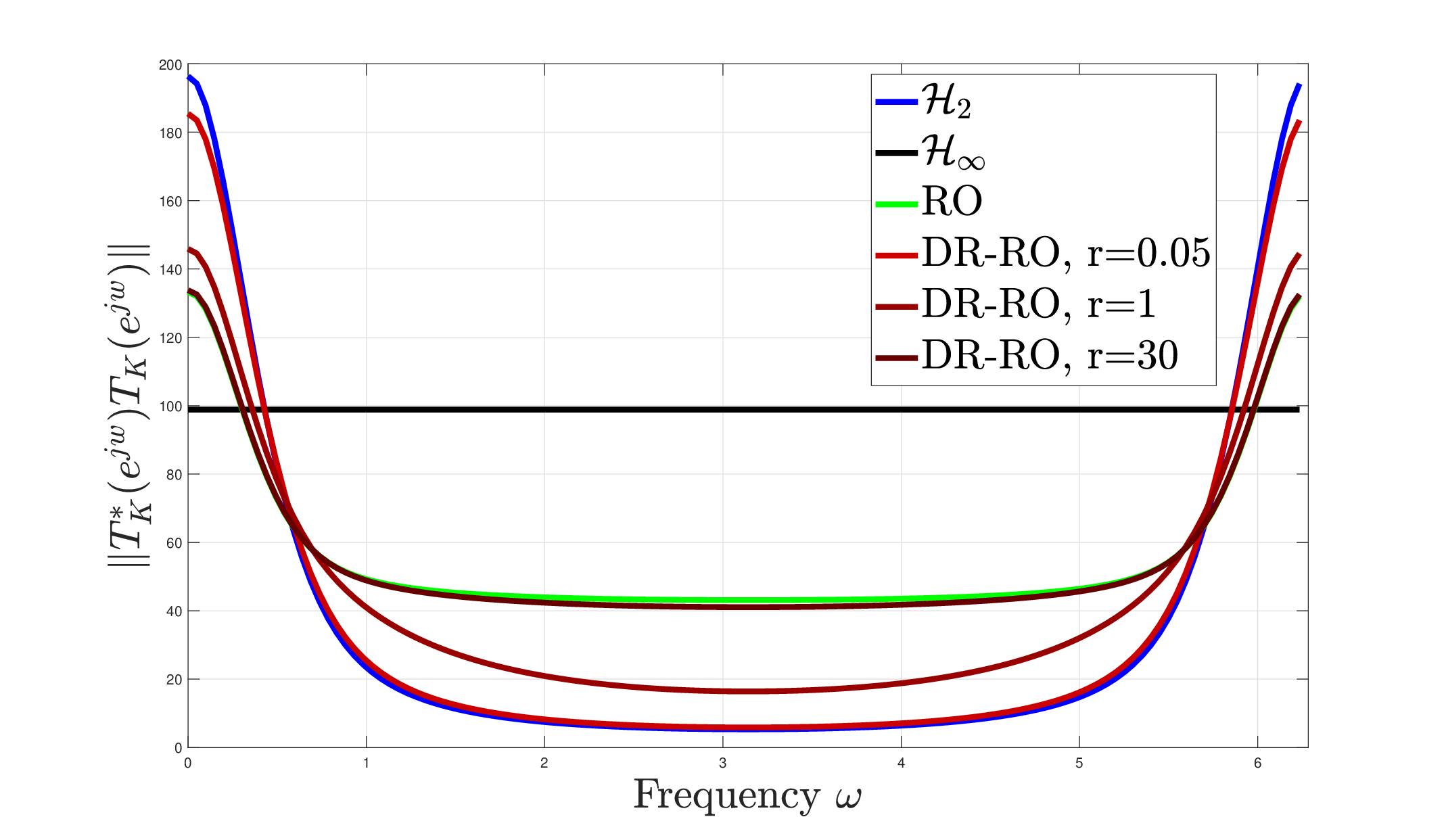}
    \vspace{-3mm}
    \caption{The operator norm, $\|T_K^\ast(\ejw)T_K(\ejw)\|$, of each controller at different frequency values, for system [REA4]. The cost of the DR-RO controller interpolates between $H_2$ and RO according to the value of $r$, across all frequencies. For a small (large) $r$,  DR matches $H_2$ (RO) \emph{across all frequencies}. 
    }
    \label{fig:R}
    \vspace{-5mm}
\end{figure}

\section{{Conclusion and Future Works}} \label{sec:conc}
We studied DR-RO control for discrete-time linear dynamical systems over an infinite horizon. Focusing on regret as a measure of performance introduces a nuanced perspective, while the incorporation of uncertainties within a Wasserstein-2 ambiguity set provides a robust framework for handling unpredictable disturbances. Solving the problem in infinite horizon enhances stability, optimality, and robustness, and aligns the framework with real-world demands, since finite-horizon methods are hampered by SDPs that grow too large.

A key departure from prior research is our deliberate consideration of dependencies among disturbances over time. This approach contrasts with simplistic assumptions of stochastic independence at each time step, and thus better captures the essence of distributional robustness. Even though the optimal controller is irrational, we introduce a computationally efficient numerical method based on fixed-point iterations to find the controller in the frequency domain. Validation through numerical experiments demonstrates the effectiveness of our framework. Looking forward, avenues for future research include finding good low-dimensional rational approximations for the controller, providing convergence guarantees for the fixed point method, extending the algorithm to MIMO systems by exploring irrational matrix spectral factorization \cite{nurdin_new_2005, ephremidze_elementary_2010}, and extending the framework to the partially observable case.


\bibliography{refs} 

\begin{thebibliography}{48}
\providecommand{\natexlab}[1]{#1}
\providecommand{\url}[1]{\texttt{#1}}
\expandafter\ifx\csname urlstyle\endcsname\relax
  \providecommand{\doi}[1]{doi: #1}\else
  \providecommand{\doi}{doi: \begingroup \urlstyle{rm}\Url}\fi

\bibitem[Aolaritei et~al.(2023{\natexlab{a}})Aolaritei, Fochesato, Lygeros, and Dörfler]{aolaritei_wasserstein_2023}
Liviu Aolaritei, Marta Fochesato, John Lygeros, and Florian Dörfler.
\newblock Wasserstein {Tube} {MPC} with {Exact} {Uncertainty} {Propagation}, April 2023{\natexlab{a}}.
\newblock URL \url{http://arxiv.org/abs/2304.12093}.
\newblock arXiv:2304.12093 [math].

\bibitem[Aolaritei et~al.(2023{\natexlab{b}})Aolaritei, Lanzetti, and Dörfler]{aolaritei_capture_2023}
Liviu Aolaritei, Nicolas Lanzetti, and Florian Dörfler.
\newblock Capture, {Propagate}, and {Control} {Distributional} {Uncertainty}, April 2023{\natexlab{b}}.
\newblock URL \url{http://arxiv.org/abs/2304.02235}.
\newblock arXiv:2304.02235 [math].

\bibitem[Bertsekas(2016)]{bertsekas2016nonlinear}
Dimitri Bertsekas.
\newblock \emph{Nonlinear programming}, volume~4.
\newblock Athena scientific, 2016.

\bibitem[Braides(2002)]{braides_gamma-convergence_2002}
Andrea Braides.
\newblock \emph{Gamma-{Convergence} for {Beginners}}.
\newblock Oxford University Press, July 2002.
\newblock ISBN 978-0-19-850784-0.
\newblock \doi{10.1093/acprof:oso/9780198507840.001.0001}.
\newblock URL \url{https://academic.oup.com/book/1987}.

\bibitem[Brouillon et~al.(2023)Brouillon, Martin, Lygeros, Dörfler, and Trecate]{brouillon2023distributionally}
Jean-Sébastien Brouillon, Andrea Martin, John Lygeros, Florian Dörfler, and Giancarlo~Ferrari Trecate.
\newblock Distributionally robust infinite-horizon control: from a pool of samples to the design of dependable controllers, 2023.

\bibitem[Dal~Maso(1993)]{dal_maso_introduction_1993}
Gianni Dal~Maso.
\newblock \emph{An {Introduction} to $\Gamma$-{Convergence}}.
\newblock Birkhäuser Boston, Boston, MA, 1993.
\newblock ISBN 978-1-4612-6709-6 978-1-4612-0327-8.
\newblock \doi{10.1007/978-1-4612-0327-8}.
\newblock URL \url{http://link.springer.com/10.1007/978-1-4612-0327-8}.

\bibitem[Didier et~al.(2022)Didier, Sieber, and Zeilinger]{didier2022system}
Alexandre Didier, Jerome Sieber, and Melanie~N Zeilinger.
\newblock A system-level approach to regret optimal control.
\newblock \emph{IEEE Control Systems Letters}, 6:\penalty0 2792--2797, 2022.

\bibitem[Doyle et~al.(1988)Doyle, Glover, Khargonekar, and Francis]{doyle_state-space_1988}
John Doyle, Keith Glover, Pramod Khargonekar, and Bruce Francis.
\newblock State-space solutions to standard $h_2$ and $h_\infty$ control problems.
\newblock In \emph{1988 {American} {Control} {Conference}}, pages 1691--1696, Atlanta, GA, USA, June 1988. IEEE.
\newblock \doi{10.23919/ACC.1988.4789992}.
\newblock URL \url{https://ieeexplore.ieee.org/document/4789992/}.

\bibitem[Doyle(1985)]{doyle1985structured}
John~C. Doyle.
\newblock Structured uncertainty in control system design.
\newblock In \emph{1985 24th IEEE Conference on Decision and Control}, pages 260--265, 1985.
\newblock \doi{10.1109/CDC.1985.268842}.

\bibitem[Ephremidze(2010)]{ephremidze_elementary_2010}
Lasha Ephremidze.
\newblock An {Elementary} {Proof} of the {Polynomial} {Matrix} {Spectral} {Factorization} {Theorem}, November 2010.
\newblock URL \url{http://arxiv.org/abs/1011.3777}.
\newblock arXiv:1011.3777 [math].

\bibitem[Ephremidze et~al.(2018)Ephremidze, Saied, and Spitkovsky]{ephremidze_algorithmization_2018}
Lasha Ephremidze, Faisal Saied, and Ilya~Matvey Spitkovsky.
\newblock On the {Algorithmization} of {Janashia}-{Lagvilava} {Matrix} {Spectral} {Factorization} {Method}.
\newblock \emph{IEEE Transactions on Information Theory}, 64\penalty0 (2):\penalty0 728--737, February 2018.
\newblock ISSN 0018-9448, 1557-9654.
\newblock \doi{10.1109/TIT.2017.2772877}.
\newblock URL \url{http://ieeexplore.ieee.org/document/8105834/}.

\bibitem[Feinberg et~al.(2019)Feinberg, Kasyanov, and Liang]{feinberg_fatous_2019}
Eugene~A. Feinberg, Pavlo~O. Kasyanov, and Yan Liang.
\newblock Fatou's {Lemma} in {Its} {Classic} {Form} and {Lebesgue}'s {Convergence} {Theorems} for {Varying} {Measures} with {Applications} to {MDPs}, June 2019.
\newblock URL \url{http://arxiv.org/abs/1902.01525}.
\newblock arXiv:1902.01525 [math].

\bibitem[Gao and Kleywegt(2022)]{gao_distributionally_2022}
Rui Gao and Anton~J. Kleywegt.
\newblock Distributionally {Robust} {Stochastic} {Optimization} with {Wasserstein} {Distance}, April 2022.
\newblock URL \url{http://arxiv.org/abs/1604.02199}.
\newblock arXiv:1604.02199 [math].

\bibitem[Gazzah et~al.(2001)Gazzah, Regalia, and Delmas]{gazzah_asymptotic_2001}
H.~Gazzah, P.A. Regalia, and J.-P. Delmas.
\newblock Asymptotic eigenvalue distribution of block {Toeplitz} matrices and application to blind {SIMO} channel identification.
\newblock \emph{IEEE Transactions on Information Theory}, 47\penalty0 (3):\penalty0 1243--1251, March 2001.
\newblock ISSN 00189448.
\newblock \doi{10.1109/18.915697}.
\newblock URL \url{http://ieeexplore.ieee.org/document/915697/}.

\bibitem[Goel and Hassibi(2021{\natexlab{a}})]{ROMF}
Gautam Goel and Babak Hassibi.
\newblock Regret-optimal measurement-feedback control.
\newblock In \emph{Learning for Dynamics and Control}, pages 1270--1280. PMLR, 2021{\natexlab{a}}.

\bibitem[Goel and Hassibi(2021{\natexlab{b}})]{goel_regret-optimal_2021-1}
Gautam Goel and Babak Hassibi.
\newblock Regret-optimal control in dynamic environments, February 2021{\natexlab{b}}.
\newblock URL \url{http://arxiv.org/abs/2010.10473}.
\newblock arXiv:2010.10473 [cs, eess, math].

\bibitem[Goel and Hassibi(2023)]{goel2023regret}
Gautam Goel and Babak Hassibi.
\newblock Regret-optimal estimation and control.
\newblock \emph{IEEE Transactions on Automatic Control}, 68\penalty0 (5):\penalty0 3041--3053, 2023.

\bibitem[Gray(2005)]{gray_toeplitz_2005}
Robert~M. Gray.
\newblock Toeplitz and {Circulant} {Matrices}: {A} {Review}.
\newblock \emph{Foundations and Trends® in Communications and Information Theory}, 2\penalty0 (3):\penalty0 155--239, 2005.
\newblock ISSN 1567-2190, 1567-2328.
\newblock \doi{10.1561/0100000006}.
\newblock URL \url{http://www.nowpublishers.com/article/Details/CIT-006}.

\bibitem[Gutierrez-Gutierrez and Crespo(2008)]{gutierrez-gutierrez_asymptotically_2008}
JesÚs Gutierrez-Gutierrez and Pedro~M. Crespo.
\newblock Asymptotically {Equivalent} {Sequences} of {Matrices} and {Hermitian} {Block} {Toeplitz} {Matrices} {With} {Continuous} {Symbols}: {Applications} to {MIMO} {Systems}.
\newblock \emph{IEEE Transactions on Information Theory}, 54\penalty0 (12):\penalty0 5671--5680, December 2008.
\newblock ISSN 0018-9448.
\newblock \doi{10.1109/TIT.2008.2006401}.
\newblock URL \url{http://ieeexplore.ieee.org/document/4675730/}.

\bibitem[Gutiérrez-Gutiérrez(2011)]{gutierrez-gutierrez_block_2011}
Jesú Gutiérrez-Gutiérrez.
\newblock Block {Toeplitz} {Matrices}: {Asymptotic} {Results} and {Applications}.
\newblock \emph{Foundations and Trends® in Communications and Information Theory}, 8\penalty0 (3):\penalty0 179--257, 2011.
\newblock ISSN 1567-2190, 1567-2328.
\newblock \doi{10.1561/0100000066}.
\newblock URL \url{http://www.nowpublishers.com/article/Details/CIT-066}.

\bibitem[Hajar et~al.(2023{\natexlab{a}})Hajar, Kargin, and Hassibi]{hajar_wasserstein_2023}
Joudi Hajar, Taylan Kargin, and Babak Hassibi.
\newblock Wasserstein {Distributionally} {Robust} {Regret}-{Optimal} {Control} under {Partial} {Observability}, July 2023{\natexlab{a}}.
\newblock URL \url{http://arxiv.org/abs/2307.04966}.
\newblock arXiv:2307.04966 [math].

\bibitem[Hajar et~al.(2023{\natexlab{b}})Hajar, Sabag, and Hassibi]{hajar_regret-optimal_2023}
Joudi Hajar, Oron Sabag, and Babak Hassibi.
\newblock Regret-{Optimal} {Control} under {Partial} {Observability}, November 2023{\natexlab{b}}.
\newblock URL \url{http://arxiv.org/abs/2311.06433}.
\newblock arXiv:2311.06433 [cs, eess, math].

\bibitem[Hakobyan and Yang(2022)]{hakobyan2022wasserstein}
Astghik Hakobyan and Insoon Yang.
\newblock Wasserstein distributionally robust control of partially observable linear systems: Tractable approximation and performance guarantee.
\newblock In \emph{2022 IEEE 61st Conference on Decision and Control (CDC)}, pages 4800--4807. IEEE, 2022.

\bibitem[Hassibi et~al.(1999)Hassibi, Sayed, and Kailath]{blackbook}
Babak Hassibi, Ali~H. Sayed, and Thomas Kailath.
\newblock \emph{Indefinite-Quadratic Estimation and Control}.
\newblock Society for Industrial and Applied Mathematics, 1999.
\newblock \doi{10.1137/1.9781611970760}.
\newblock URL \url{https://epubs.siam.org/doi/abs/10.1137/1.9781611970760}.

\bibitem[Kailath et~al.(2000)Kailath, Sayed, and Hassibi]{kailath_linear_2000}
Thomas Kailath, Ali~H. Sayed, and Babak Hassibi.
\newblock \emph{Linear estimation}.
\newblock Prentice {Hall} information and system sciences series. Prentice Hall, Upper Saddle River, N.J, 2000.
\newblock ISBN 978-0-13-022464-4.

\bibitem[Kalman(1960)]{kalman_new_1960}
R.~E. Kalman.
\newblock A {New} {Approach} to {Linear} {Filtering} and {Prediction} {Problems}.
\newblock \emph{Journal of Basic Engineering}, 82\penalty0 (1):\penalty0 35--45, March 1960.
\newblock ISSN 0021-9223.
\newblock \doi{10.1115/1.3662552}.

\bibitem[Kim and Yang(2021)]{kim_distributional_2021}
Kihyun Kim and Insoon Yang.
\newblock Distributional robustness in minimax linear quadratic control with {Wasserstein} distance, February 2021.
\newblock URL \url{http://arxiv.org/abs/2102.12715}.
\newblock arXiv:2102.12715 [cs, eess, math].

\bibitem[Leibfritz and Lipinski(2003)]{aircraft}
F.~Leibfritz and W.~Lipinski.
\newblock Description of the benchmark examples in compleib 1.0.
\newblock \emph{Dept. Math, Univ. Trier, Germany}, 32, 2003.

\bibitem[Liu et~al.(2023)Liu, Shi, and Tokekar]{liu_data-driven_2023}
Rui Liu, Guangyao Shi, and Pratap Tokekar.
\newblock Data-{Driven} {Distributionally} {Robust} {Optimal} {Control} with {State}-{Dependent} {Noise}, August 2023.
\newblock URL \url{http://arxiv.org/abs/2303.02293}.
\newblock arXiv:2303.02293 [cs].

\bibitem[Martin et~al.(2022)Martin, Furieri, Dorfler, Lygeros, and Ferrari-Trecate]{martin2022safe}
Andrea Martin, Luca Furieri, Florian Dorfler, John Lygeros, and Giancarlo Ferrari-Trecate.
\newblock Safe control with minimal regret.
\newblock In \emph{Learning for Dynamics and Control Conference}, pages 726--738. PMLR, 2022.

\bibitem[Martinelli et~al.(2023)Martinelli, Martin, Ferrari-Trecate, and Furieri]{martinelli_closing_2023}
Daniele Martinelli, Andrea Martin, Giancarlo Ferrari-Trecate, and Luca Furieri.
\newblock Closing the {Gap} to {Quadratic} {Invariance}: a {Regret} {Minimization} {Approach} to {Optimal} {Distributed} {Control}, November 2023.
\newblock URL \url{http://arxiv.org/abs/2311.02068}.
\newblock arXiv:2311.02068 [cs, eess].

\bibitem[Mohajerin~Esfahani and Kuhn(2018)]{mohajerin_esfahani_data-driven_2018}
Peyman Mohajerin~Esfahani and Daniel Kuhn.
\newblock Data-driven distributionally robust optimization using the {Wasserstein} metric: performance guarantees and tractable reformulations.
\newblock \emph{Mathematical Programming}, 171\penalty0 (1-2):\penalty0 115--166, September 2018.
\newblock ISSN 0025-5610, 1436-4646.
\newblock \doi{10.1007/s10107-017-1172-1}.
\newblock URL \url{http://link.springer.com/10.1007/s10107-017-1172-1}.

\bibitem[Nurdin(2005)]{nurdin_new_2005}
H.I. Nurdin.
\newblock A {New} {Approach} to {Spectral} {Factorization} of a {Class} of {Matrix}-{Valued} {Spectral} {Densities}.
\newblock In \emph{Proceedings of the 44th {IEEE} {Conference} on {Decision} and {Control}}, pages 5929--5934, Seville, Spain, 2005. IEEE.
\newblock ISBN 978-0-7803-9567-1.
\newblock \doi{10.1109/CDC.2005.1583110}.
\newblock URL \url{http://ieeexplore.ieee.org/document/1583110/}.

\bibitem[Rino(1970)]{rino_factorization_1970}
C.~Rino.
\newblock Factorization of spectra by discrete {Fourier} transforms ({Corresp}.).
\newblock \emph{IEEE Transactions on Information Theory}, 16\penalty0 (4):\penalty0 484--485, July 1970.
\newblock ISSN 0018-9448.
\newblock \doi{10.1109/TIT.1970.1054502}.
\newblock URL \url{http://ieeexplore.ieee.org/document/1054502/}.

\bibitem[Sabag and Hassibi(2022)]{sabag_regret-optimal_2022}
Oron Sabag and Babak Hassibi.
\newblock Regret-{Optimal} {Filtering} for {Prediction} and {Estimation}.
\newblock \emph{IEEE Transactions on Signal Processing}, 70:\penalty0 5012--5024, 2022.
\newblock ISSN 1053-587X, 1941-0476.
\newblock \doi{10.1109/TSP.2022.3212153}.
\newblock URL \url{https://ieeexplore.ieee.org/document/9911672/}.

\bibitem[Sabag et~al.(2021)Sabag, Goel, Lale, and Hassibi]{sabag2021regret}
Oron Sabag, Gautam Goel, Sahin Lale, and Babak Hassibi.
\newblock Regret-optimal controller for the full-information problem.
\newblock In \emph{2021 American Control Conference (ACC)}, pages 4777--4782. IEEE, 2021.

\bibitem[Samuelson and Yang(2017)]{samuelson2017}
Samantha Samuelson and Insoon Yang.
\newblock Data-driven distributionally robust control of energy storage to manage wind power fluctuations.
\newblock In \emph{2017 IEEE Conference on Control Technology and Applications (CCTA)}, pages 199--204, 2017.
\newblock \doi{10.1109/CCTA.2017.8062463}.

\bibitem[Santambrogio(2015)]{wassOT2}
Filippo Santambrogio.
\newblock Optimal transport for applied mathematicians.
\newblock 2015.

\bibitem[Taskesen et~al.(2023)Taskesen, Iancu, Kocyigit, and Kuhn]{tacskesen2023distributionally}
Bahar Taskesen, Dan~A Iancu, Cagil Kocyigit, and Daniel Kuhn.
\newblock Distributionally robust linear quadratic control.
\newblock \emph{arXiv preprint arXiv:2305.17037}, 2023.

\bibitem[Tzortzis et~al.(2014)Tzortzis, Charalambous, and Charalambous]{tzortzis_dynamic_2014}
Ioannis Tzortzis, Charalambos~D. Charalambous, and Themistoklis Charalambous.
\newblock Dynamic {Programming} {Subject} to {Total} {Variation} {Distance} {Ambiguity}, February 2014.
\newblock URL \url{http://arxiv.org/abs/1402.1009}.
\newblock arXiv:1402.1009 [math].

\bibitem[Tzortzis et~al.(2016)Tzortzis, Charalambous, Charalambous, Kourtellaris, and Hadjicostis]{tzortzis_robust_2016}
Ioannis Tzortzis, Charalambos~D. Charalambous, Themistoklis Charalambous, Christos~K. Kourtellaris, and Christoforos~N. Hadjicostis.
\newblock Robust {Linear} {Quadratic} {Regulator} for uncertain systems.
\newblock In \emph{2016 {IEEE} 55th {Conference} on {Decision} and {Control} ({CDC})}, pages 1515--1520, Las Vegas, NV, USA, December 2016. IEEE.
\newblock ISBN 978-1-5090-1837-6.
\newblock \doi{10.1109/CDC.2016.7798481}.
\newblock URL \url{http://ieeexplore.ieee.org/document/7798481/}.

\bibitem[van~der Grinten(1968)]{grinten1968uncertainty}
P.~M. E.~M. van~der Grinten.
\newblock Uncertainty in measurement and control.
\newblock \emph{Statistica Neerlandica}, 22\penalty0 (1):\penalty0 43--63, 1968.
\newblock \doi{https://doi.org/10.1111/j.1467-9574.1960.tb00617.x}.

\bibitem[Villani(2009)]{villani_optimal_2009}
Cédric Villani.
\newblock \emph{Optimal transport: old and new}.
\newblock Number 338 in Grundlehren der mathematischen {Wissenschaften}. Springer, Berlin, 2009.
\newblock ISBN 978-3-540-71049-3.
\newblock OCLC: ocn244421231.

\bibitem[Yan et~al.(2023)Yan, Taha, and Bitar]{DRORO}
Shuhao Yan, Feras~Al Taha, and Eilyan Bitar.
\newblock A distributionally robust approach to regret optimal control using the wasserstein distance, 2023.

\bibitem[Yang(2020)]{yang2020wasserstein}
Insoon Yang.
\newblock Wasserstein distributionally robust stochastic control: A data-driven approach.
\newblock \emph{IEEE Transactions on Automatic Control}, 66\penalty0 (8):\penalty0 3863--3870, 2020.

\bibitem[Zames(1981)]{zames_feedback_1981}
G.~Zames.
\newblock Feedback and optimal sensitivity: {Model} reference transformations, multiplicative seminorms, and approximate inverses.
\newblock \emph{IEEE Transactions on Automatic Control}, 26\penalty0 (2):\penalty0 301--320, April 1981.
\newblock ISSN 0018-9286.
\newblock \doi{10.1109/TAC.1981.1102603}.
\newblock URL \url{http://ieeexplore.ieee.org/document/1102603/}.

\bibitem[Zhong and Zhu(2023)]{zhong2023nonlinear}
Zhengang Zhong and Jia-Jie Zhu.
\newblock Nonlinear wasserstein distributionally robust optimal control.
\newblock \emph{arXiv preprint arXiv:2304.07415}, 2023.

\bibitem[Zhou et~al.(1996)Zhou, Doyle, and Glover]{zhou_robust_1996}
Kemin Zhou, John~Comstock Doyle, and K.~Glover.
\newblock \emph{Robust and optimal control}.
\newblock Prentice Hall, Upper Saddle River, N.J, 1996.
\newblock ISBN 978-0-13-456567-5.

\end{thebibliography}
\newpage
\appendix
\begin{center}
{\huge Appendix}
\end{center}



\section{Proofs of Theorems} \label{appdx:proofs}
In this section, we give the proofs of the corresponding theorems and lemmas.

\subsection{Proof of Theorem~\ref{thm:strong_duality}}

We prove Theorem~\ref{thm:strong_duality} in multiple steps. First, we give a finite-horizon counterpart of the strong duality result following {\cite{DRORO}}. Second, we rewrite the objective functions of both the finite-horizon and the infinite-horizon dual problems using normalized spectral measure. Third, we show the pointwise convergence of the finite-horizon dual objective function to infinite-horizon objective by analyzing the limiting behavior of spectrum of Toeplitz matrices. Finally, we show that the infinite-horizon dual problem attains a finite value and limits of the optimal value (optimal solution resp.) to finite-horizon dual problem coincide with the optimal value (optimal solution resp.) of the infinite-horizon dual problem.

\paragraph{{Step 1: Finite-Horizon Duality.}} In the first step, we state a strong duality result for finite-horizon problems adapted from {\cite{DRORO}. 
\begin{theorem}[{Strong Duality in the Finite-Horizon [Theorem 2 in \cite{DRORO}]}] \label{thm:finite_duality}Fix $T\>0$, and let $\Kfin_T \in \R^{d\abs{\II_T} \times p\abs{\II_T}}$ be a causal (lower block triangular) finite-horizon controller. Under assumption~\ref{asmp:nominal}, the average worst-case expected regret incurred by $\Kfin_T$ during the time interval $\II_T$ 
\begin{equation}
    \frac{1}{ \abs{\II_T}}\sup_{\Pr \in \W_T} \E_{\Pr}\br{\regret_T(\Kfin_T, {\wfin_T})}
\end{equation}
attains a finite value and is equivalent to the following dual convex problem:
\begin{equation}\label{eq:finite_horizon_dual}
     \inf_{\gamma \geq 0}   \gamma \pr{r^2 - \frac{1}{\abs{\II_T}}\tr{\Ifin_T} } +\gamma  \frac{1}{\abs{\II_T}} \tr\br{ ( \Ifin_T - \gamma^{\-1}\CCfin_{\Kfin,T} )^{\-1}}\quad  \textrm{s.t.} \quad  \gamma \Ifin_T \psdg \CCfin_{\Kfin,T}.
\end{equation}
where $\CCfin_{\Kfin,T} \defeq \Tfin_{\Kfin,T}^\ast \Tfin_{\Kfin,T} - \Tfin_{\Kfin_\circ,T}^\ast \Tfin_{\Kfin_\circ,T}$. Furthermore, the worst-case disturbance, $\wfin_{\star,T}$, can be identified from the nominal disturbance, $\wfin_{\circ,T}$, as $\wfin_{\star,T} = ( \Ifin_T - \gamma_{\star,T}^{\-1}\CCfin_{\Kfin,T})^{\-1}\wfin_{\circ,T}$ where $\gamma_{\star,T}$ is the optimal solution to~\eqref{eq:finite_horizon_dual}, which uniquely satisfies the following equation:
\begin{equation}\label{eq:finite_worst_gamma}
        \frac{1}{\abs{\II_T}}\tr\br{(( \Ifin_T - \gamma_{\star,T}^{\-1}\CCfin_{\Kfin,T})^{\-1} - \Ifin_T)^2} = r^2.
\end{equation}
\end{theorem}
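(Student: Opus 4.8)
After the normalization $x_t\leftarrow Q^{1/2}x_t$, $u_t\leftarrow R^{1/2}u_t$, the regret becomes the quadratic form $\regret_T(\Kfin_T,\wfin_T)=\wfin_T^\ast\,\CCfin_{\Kfin,T}\,\wfin_T$ in the \emph{fixed} symmetric matrix $\CCfin_{\Kfin,T}$ (which moreover satisfies $\CCfin_{\Kfin,T}\psdgeq 0$, since $\Kfin_{\circ,T}$ is the unconstrained, not merely causal, minimizer of the quadratic cost), so the left-hand side is precisely a Wasserstein-$2$ distributionally robust expectation of an indefinite quadratic loss over the ball of squared radius $r^2\abs{\II_T}$ centered at $\Pr_\circ$. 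The quickest route is to invoke Theorem~2 of \cite{DRORO} with this substituted cost and radius $r\sqrt{\abs{\II_T}}$; the self-contained argument proceeds as follows. First I would apply the standard Wasserstein strong-duality theorem (Gao--Kleywegt; Blanchet--Murthy; Mohajerin~Esfahani--Kuhn): for a loss of at most quadratic growth and a nominal $\Pr_\circ$ with finite second moment,
\begin{align*}
\sup_{\Was(\Pr,\Pr_\circ)^2\le r^2\abs{\II_T}}\E_{\Pr}[\ell(\wfin_T)] &= \inf_{\gamma\ge 0}\Big\{\gamma\, r^2\abs{\II_T}+\E_{\Pr_\circ}\big[\ell^{\gamma}(\wfin_{\circ,T})\big]\Big\},\\
\text{where}\qquad \ell^{\gamma}(w) &\defeq \sup_{w'}\,\ell(w')-\gamma\norm{w'-w}^2,
\end{align*}
applied to $\ell(w)=w^\ast\CCfin_{\Kfin,T}w$.

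Second, I would evaluate the inner envelope in closed form. Completing the square gives $\ell^{\gamma}(w)=\sup_{w'}\, w'^\ast(\CCfin_{\Kfin,T}-\gamma\Ifin_T)w'+2\gamma\, w'^\ast w-\gamma\norm{w}^2$, which is $+\infty$ unless $\gamma\Ifin_T\psdg\CCfin_{\Kfin,T}$; on that region it is a strictly concave quadratic with unique maximizer $w'_\star(w)=(\Ifin_T-\gamma^{-1}\CCfin_{\Kfin,T})^{-1}w$ and value $\ell^{\gamma}(w)=\gamma\, w^\ast\big[(\Ifin_T-\gamma^{-1}\CCfin_{\Kfin,T})^{-1}-\Ifin_T\big]w$. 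Hence the dual's effective domain is $\{\gamma\ge0:\gamma\Ifin_T\psdg\CCfin_{\Kfin,T}\}=(\lambda_{\max}(\CCfin_{\Kfin,T}),\infty)$, which is nonempty, and this makes the supremum finite. Taking $\E_{\Pr_\circ}[\cdot]$ and using $\E_{\Pr_\circ}[\wfin_{\circ,T}\wfin_{\circ,T}^\ast]=\Ifin_T$ from Assumption~\ref{asmp:nominal} --- equivalently $\E_{\Pr_\circ}[\wfin_{\circ,T}^\ast M\wfin_{\circ,T}]=\tr(M)$ for all $M$ --- yields $\E_{\Pr_\circ}[\ell^{\gamma}(\wfin_{\circ,T})]=\gamma\tr[(\Ifin_T-\gamma^{-1}\CCfin_{\Kfin,T})^{-1}]-\gamma\tr(\Ifin_T)$. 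Substituting into the duality identity and dividing by $\abs{\II_T}$ reproduces exactly \eqref{eq:finite_horizon_dual}; convexity and (for $r>0$) coercivity of this objective in $\gamma$ give existence of the minimizer $\gamma_{\star,T}$.

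For the worst-case disturbance, the optimal transport plan attaining the supremum is the deterministic pushforward of $\Pr_\circ$ under $w_\circ\mapsto w'_\star(w_\circ)$, i.e. $\wfin_{\star,T}=(\Ifin_T-\gamma_{\star,T}^{-1}\CCfin_{\Kfin,T})^{-1}\wfin_{\circ,T}$; this is the step that uses absolute continuity of $\Pr_\circ$ (Assumption~\ref{asmp:nominal}), ensuring a Monge solution. Optimality of $\gamma_{\star,T}$ in the convex dual --- equivalently, tightness of the Wasserstein constraint at the optimum --- then forces $r^2\abs{\II_T}=\Was(\Pr_{\star,T},\Pr_\circ)^2=\E_{\Pr_\circ}\norm{\wfin_{\star,T}-\wfin_{\circ,T}}^2=\tr\big[((\Ifin_T-\gamma_{\star,T}^{-1}\CCfin_{\Kfin,T})^{-1}-\Ifin_T)^2\big]$, which is \eqref{eq:finite_worst_gamma}. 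Uniqueness of $\gamma_{\star,T}$ follows because $\gamma\mapsto\tr[((\Ifin_T-\gamma^{-1}\CCfin_{\Kfin,T})^{-1}-\Ifin_T)^2]=\sum_i\big(\lambda_i/(\gamma-\lambda_i)\big)^2$, the sum over the eigenvalues $\lambda_i\ge0$ of $\CCfin_{\Kfin,T}$, is continuous and strictly decreasing on $(\lambda_{\max}(\CCfin_{\Kfin,T}),\infty)$, descending from $+\infty$ to $0$.

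The main obstacle is the rigorous justification of the two ``soft'' steps: (i) verifying the hypotheses of the Wasserstein strong-duality theorem for the indefinite quadratic loss --- upper semicontinuity, at-most-quadratic growth, and a constraint qualification guaranteeing that the infimum over $\gamma$ is attained with no residual gap at the $\gamma$-level; and (ii) confirming that the worst-case coupling is induced by the deterministic map $w'_\star(\cdot)$ rather than a genuinely randomized one, which is exactly where $\Pr_\circ\ll$ Lebesgue enters. Both are already established in \cite{DRORO}, so in the paper I would simply cite Theorem~2 there and carry the constants $r\sqrt{\abs{\II_T}}$ and $\abs{\II_T}$ through the normalization.
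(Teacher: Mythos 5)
Your proposal is correct, and it is worth noting that the paper itself does not prove this statement at all: it is imported verbatim (up to the normalization by $\abs{\II_T}$ and the radius $r\sqrt{\abs{\II_T}}$) as Theorem~2 of \cite{DRORO}, which is exactly the citation route you identify as the "quickest". Your self-contained reconstruction is sound and matches the standard argument behind that cited result: the Gao--Kleywegt/Blanchet--Murthy duality applied to the indefinite quadratic loss $w^\ast \CCfin_{\Kfin,T} w$, the closed-form envelope with maximizer $(\Ifin_T-\gamma^{-1}\CCfin_{\Kfin,T})^{-1}w$ and value $\gamma w^\ast\br{(\Ifin_T-\gamma^{-1}\CCfin_{\Kfin,T})^{-1}-\Ifin_T}w$ on the domain $\gamma\Ifin_T\psdg\CCfin_{\Kfin,T}$, the trace formula via $\E_{\Pr_\circ}[\wfin_{\circ,T}\wfin_{\circ,T}^\ast]=\Ifin_T$ reproducing \eqref{eq:finite_horizon_dual}, the Monge pushforward (where absolute continuity in Assumption~\ref{asmp:nominal} enters) giving the worst-case disturbance, and the first-order optimality in $\gamma$, which indeed reduces to $\sum_i \lambda_i^2/(\gamma_\star-\lambda_i)^2=r^2\abs{\II_T}$, i.e.\ \eqref{eq:finite_worst_gamma}, with uniqueness from strict monotonicity. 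The only caveats are degenerate ones you may wish to flag: strict decrease from $+\infty$ to $0$, and hence uniqueness of $\gamma_{\star,T}$, requires $r>0$ and $\CCfin_{\Kfin,T}\neq 0$ (which holds since $\CCfin_{\Kfin,T}\psdgeq 0$ and a causal $\Kfin_T$ cannot coincide with the noncausal optimum in nondegenerate problems); these are implicit in the paper's setting. What your longer derivation buys over the paper's bare citation is an explicit accounting of where each part of Assumption~\ref{asmp:nominal} is used.
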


\paragraph{{Step 2: Rewriting Objective Functions via Normalized Spectral Measure.}}
In the next step, we introduce an alternative formulation of the finite-horizon problem in \eqref{eq:finite_horizon_dual} and infinite-horizon problem in \eqref{eq:dual_wosrt_case_regret} in terms of the normalized spectral measures of $\CCfin_{\Kfin,T}$ and $\CC_{\K}$, respectively. To do this, we define the spectrum of the matrix $\CCfin_{\Kfin,T}$,
\begin{align} \label{eq:spectrum}
    \spect(\CCfin_{\Kfin,T}) \defeq \{ \lambda \in \C \mid \lambda \Ifin_T - \CCfin_{\Kfin,T} \, \text{ is singular.}\},
\end{align}
and similarly for the operator $\CC_{\K}$. Note that both $\CCfin_{\Kfin,T}$ and $\CC_{\K}$ are positive definite and their spectrum consist of positive operators since
\begin{align*}
    \CC_{\K} &= (\Delta \K \- \Delta \K_{\circ})^\ast  (\Delta \K \- \Delta \K_{\circ}) \\
    \CCfin_{\Kfin,T} &= (\bm{\Delta}_T \Kfin_{T} \- \bm{\Delta}_T \Kfin_{\circ,T})^\ast (\bm{\Delta}_T \Kfin_{T} \- \bm{\Delta}_T \Kfin_{\circ,T}) 
\end{align*}
where $\bm{\Delta}_T^\ast \bm{\Delta}_T \= \Ifin_T \+ \Ffin_{T}^\ast \Ffin_{T}$ is the block Cholesky factorization with lower block triangular $\bm{\Delta}_T$ and $\bm{\Delta}_T^{\inv}$, and $\Delta^\ast \Delta \= \I \+ \F^\ast \F$ is the block canonical spectral factorization with causal ${\Delta}$ and ${\Delta}^{\inv}$. Furthermore, we denote by
\begin{equation}\label{eq:spect_measure}
    \widehat{\rho}_T \defeq \frac{1}{p \abs{\II_T}}  \suml_{k = 1}^{p\abs{\II_T}} \delta_{\lambda_{k,T}}, 
\end{equation}
the normalized spectral measure of $\CCfin_{\Kfin,T}$ where $\lambda_{k}(\CCfin_{\Kfin,T})$ is the $k^{\text{th}}$ largest eigenvalue of $\CCfin_{\Kfin,T} \in \R^{p\abs{\II_T}\times p\abs{\II_T}}$, for $k=1,\dots, p\abs{\II_T}$, and define the convex function
\begin{align}\label{eq:function} 
    f(\gamma,\lambda) &\defeq \gamma(r^2 - p) + \gamma p(1-\gamma^\inv \lambda )^{\inv}.
\end{align}
Note that the support of the normalized spectral measure of $\CCfin_{\Kfin,T}$ overlaps with its spectrum, \ie, $\supp(\widehat{\rho}_T) = \spect(\CCfin_{\Kfin,T})$. Then, we can rewrite the finite-horizon objective function in \eqref{eq:finite_horizon_dual} as follows
\begin{align}\label{eq:finite_objective}
    F_T(\gamma) \defeq  \begin{cases}
        \int f(\gamma,\lambda) \, \widehat{\rho}_T(d \lambda), &\text{ if } \; \gamma > \sup( \supp(\widehat{\rho}_T) ) \\ 
        +\infty, &\text{ o.w.}
    \end{cases}
\end{align}
Similarly, denoting by $\rho(d \lambda)$ the normalized spectral measure of the operator $\CC_{\K}$, we can rewrite the infinite-horizon objective function in \eqref{eq:dual_wosrt_case_regret} as follows
\begin{align}\label{eq:infinite_objective}
    F(\gamma) \defeq  \begin{cases}
        \int f(\gamma,\lambda) \, {\rho}(d \lambda), &\text{ if } \; \gamma > \sup( \supp({\rho}) ) \\ 
        +\infty, &\text{ o.w.}
    \end{cases}
\end{align}
Given these definitions, our goal is to show that 
\begin{equation}\label{eq:limit_inf}
    \lim_{T\to \infty} \inf_{\gamma \geq 0} F_T(\gamma) = \inf_{\gamma \geq 0} F(\gamma) \; \text{ and } \; \lim_{T\to \infty} \gamma_{\star,T} = \gamma_{\star}.
\end{equation}

\paragraph{{Step 3: Asymptotic Equivalence of the Objective Functions.}}
In this step, we aim to show that $\lim_{T\to \infty} F_T(\gamma) = F(\gamma)$ pointwise. To do so, we first note that the finite-horizon matrix $\CCfin_{\Kfin,T}$ and the finite-dimensional block Toeplitz matrix $\CC_{\K,T}$ of horizon $T$ obtained from the block Toeplitz operator $\CC_{\K}$ are asymptotically equivalent \citep{gray_toeplitz_2005}, \ie,
\begin{equation}\label{eq:asymp_equi}
    \lim_{T\to\infty} \frac{1}{\abs{\II_T}} \tr(\abs{\CCfin_{\Kfin,T}-\CC_{\K,T}}) = 0.
\end{equation}
In other words, the matrix $\CCfin_{\Kfin,T}$ is asymptotically block Toeplitz. Afterwards, we resort to Szego limit theorems for block Toeplitz matrices \citep{gazzah_asymptotic_2001,gutierrez-gutierrez_asymptotically_2008, gutierrez-gutierrez_block_2011}. Denoting by $\rho_T(d \lambda)$ the normalized spectral measure of $\CC_{\K,T}$, there exists a constant $b>0$ such that $\sup(\supp(\rho_T)) \leq b$ for all $T>0$ \citep{gray_toeplitz_2005}. Furthermore, for any continous function $\phi :\R \to \R$, we have that \citep{gazzah_asymptotic_2001}
\begin{equation}\label{eq:weak_conv}
    \lim_{T\to\infty} \int \phi(\lambda) \,\rho_T(d \lambda) =  \int \phi(\lambda)\, \rho(d \lambda).
\end{equation}
Therefore, by the asymptotic equivalance of $\CCfin_{\Kfin,T}$ and $\CC_{\K,T}$ in \eqref{eq:asymp_equi}, we have that 
\begin{equation}\label{eq:weak_conv1}
    \lim_{T\to\infty} \int \phi(\lambda) \,\widehat{\rho}_T(d \lambda) =  \int \phi(\lambda)\, \rho(d \lambda).
\end{equation}
Since $f(\gamma,\lambda)$ in \eqref{eq:function} is continous in both terms, for any $\gamma\geq 0$ such that $F(\gamma)<\infty$, we have that
\begin{equation}\label{eq:weak_conv2}
    \lim_{T\to\infty} \int f(\gamma,\lambda) \,\widehat{\rho}_T(d \lambda) =  \int f(\gamma,\lambda) \,\rho(d \lambda).
\end{equation}

\paragraph{{Step 4: Asymptotic Equivalence of the Infimum and the Solution.}}
In the final step, we show that \eqref{eq:limit_inf} holds. To do so, we use the notion of $\Gamma-$convergence \citep{dal_maso_introduction_1993, braides_gamma-convergence_2002}. We introdcuce the following theorem that will be the key in showing \eqref{eq:limit_inf}.

\begin{theorem}[{$\Gamma-$convergence adapted from \citep{dal_maso_introduction_1993}}]\label{thm:gamma_conv}
Let $F_T: \mathbb{X} \to \R\cup\{\infty\}$ for each $T>0$, and $F: \mathbb{X} \to \R\cup\{\infty\}$ be functions such that 
\begin{itemize}
    \item[i.] for every $x\in\mathbb{X}$, and every $\{x_T\}$ such that $x_T \to x$, it holds that
    \begin{equation}\label{eq:liminf}
        F(x) \leq \liminf_{T \to \infty} F_T(x_T),
    \end{equation}
    \item[ii.] for every $x\in\mathbb{X}$, there exists $\{x_T\}$ such that $x_T \to x$, and
    \begin{equation}\label{eq:limsup}
        F(x) \geq \limsup_{T \to \infty} F_T(x_T).
    \end{equation}
\end{itemize}
Then, we have that
\begin{itemize}
    \item[i.] $\lim_{T\to \infty} \inf_{x\in \mathbb{X}} F_T(x) = \inf_{x\in \mathbb{X}} F(x)$,
    \item[ii.] if $x_{\star,T}\in\mathbb{X}$ is the minimizer of $F_T$ for every $T>0$ and $x_{\star}\in\mathbb{X}$ is the minimizer of $F$, then $\lim_{T\to \infty} x_{\star,T} = x_{\star}$.
\end{itemize}
\end{theorem}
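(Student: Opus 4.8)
The plan is to prove Theorem~\ref{thm:gamma_conv} by the standard two-part argument behind the fundamental theorem of $\Gamma$-convergence, the only non-formal ingredient being a precompactness (equi-coercivity) property of the near-minimizers, which will be supplied in the application by confining the admissible multiplier $\gamma$ to a fixed compact interval. For the easy half of conclusion~(i): fix an arbitrary $x\in\mathbb{X}$ and let $\{x_T\}$ be a recovery sequence for $x$ as furnished by hypothesis~(ii). Since $\inf_{\mathbb{X}}F_T \le F_T(x_T)$ for every $T$, taking $\limsup$ and using \eqref{eq:limsup} gives $\limsup_{T\to\infty}\inf_{\mathbb{X}}F_T \le \limsup_{T\to\infty}F_T(x_T) \le F(x)$; minimizing over $x$ yields $\limsup_{T\to\infty}\inf_{\mathbb{X}}F_T \le \inf_{\mathbb{X}}F$.

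For the matching lower bound, set $\ell \defeq \liminf_{T\to\infty}\inf_{\mathbb{X}}F_T$ and pass to a subsequence $\{T_k\}$ with $\inf_{\mathbb{X}}F_{T_k}\to\ell$. Choose near-minimizers $x_{T_k}\in\mathbb{X}$ with $F_{T_k}(x_{T_k}) \le \inf_{\mathbb{X}}F_{T_k} + 1/k$. Here we invoke precompactness of $\{x_{T_k}\}$ to extract a further subsequence (not relabeled) along which $x_{T_k}\to x$ for some $x\in\mathbb{X}$. Hypothesis~(i), equation~\eqref{eq:liminf}, then gives $F(x) \le \liminf_k F_{T_k}(x_{T_k}) = \ell$, hence $\inf_{\mathbb{X}}F \le \ell$. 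Combining this with the easy half, $\inf_{\mathbb{X}}F \le \liminf_{T\to\infty}\inf_{\mathbb{X}}F_T \le \limsup_{T\to\infty}\inf_{\mathbb{X}}F_T \le \inf_{\mathbb{X}}F$, so the limit exists and equals $\inf_{\mathbb{X}}F$, proving conclusion~(i).

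For conclusion~(ii), the exact minimizers $x_{\star,T}$ form a minimizing sequence: $F_T(x_{\star,T}) = \inf_{\mathbb{X}}F_T \to \inf_{\mathbb{X}}F$ by conclusion~(i). By the same precompactness, every subsequence of $\{x_{\star,T}\}$ has a sub-subsequence $x_{\star,T_k}\to\tilde x\in\mathbb{X}$, and applying \eqref{eq:liminf} gives $F(\tilde x) \le \liminf_k F_{T_k}(x_{\star,T_k}) = \inf_{\mathbb{X}}F$, so $\tilde x$ is a minimizer of $F$. When the minimizer of $F$ is unique — which holds in the application, $F$ being strictly convex — this forces $\tilde x = x_\star$; since every subsequence of $\{x_{\star,T}\}$ has a sub-subsequence converging to the common limit $x_\star$, the whole sequence converges, $x_{\star,T}\to x_\star$.

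The main obstacle is the precompactness of the (near-)minimizers in $\mathbb{X}$; without some coercivity assumption the abstract statement fails, so this must be established from the structure of the problem. In our setting we take $\mathbb{X}$ to be a compact interval of admissible $\gamma$: any $\gamma \le \sup(\supp(\widehat{\rho}_T))$ makes $F_T(\gamma)=+\infty$ by \eqref{eq:finite_objective} (and likewise $F(\gamma)=+\infty$ by \eqref{eq:infinite_objective}), whereas, since $f(\gamma,\lambda)\ge\gamma r^2$ for $\lambda\ge 0$ and feasible $\gamma$ (directly from \eqref{eq:function}) and since the spectra $\spect(\CCfin_{\Kfin,T})$ are bounded uniformly in $T$ by the constant $b$ of Step~3, evaluating $F_T$ at a fixed feasible point bounds $\gamma_{\star,T}$ from above uniformly in $T$. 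Thus all minimizers and near-minimizers lie in a common compact set, and Bolzano--Weierstrass furnishes the convergent subsequences used above; strict convexity of $F$, hence uniqueness of $x_\star$, then upgrades subsequential convergence to convergence of the full sequence in conclusion~(ii).
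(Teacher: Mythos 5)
Your proof is correct, but note that the paper itself offers no proof of this statement: it is quoted as an auxiliary result ``adapted from'' the $\Gamma$-convergence monograph of Dal Maso, so there is no in-paper argument to compare against. What you give is the standard proof of the fundamental theorem of $\Gamma$-convergence (recovery sequences for $\limsup_T \inf F_T \le \inf F$, near-minimizers plus compactness and the liminf inequality for the reverse bound, and the subsequence principle plus uniqueness of the minimizer of $F$ for convergence of $x_{\star,T}$), and it is sound; the only routine detail left implicit is that the liminf inequality, stated for full sequences, is applied along a subsequence, which is justified by extending the subsequence to a full sequence constant off its indices. More importantly, you correctly flag that the theorem \emph{as stated in the paper} is false without an equi-coercivity/precompactness hypothesis and without uniqueness of the minimizer of $F$ (take $F_T(x)=\min\{|x-T|,1\}$ on $\mathbb{X}=\R$: both $\Gamma$-convergence conditions hold with $F\equiv 1$, yet $\inf F_T = 0 \not\to 1$), and you supply both ingredients in the application: coercivity from $f(\gamma,\lambda)\ge \gamma r^2$ together with a common feasible point, and uniqueness of $\gamma_\star$ from strict convexity. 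This makes your write-up more careful than the paper's citation. One small caveat on the application side: the uniform bound $b$ in the paper's Step~3 is stated for the spectral measures $\rho_T$ of the Toeplitz truncations, not directly for $\widehat{\rho}_T$ of the finite-horizon matrices $\CCfin_{\Kfin,T}$, so your claim that a single $\gamma_0$ is feasible for all $T$ needs a uniform bound on $\sup(\supp(\widehat{\rho}_T))$ (e.g.\ via a uniform operator-norm bound on $\CCfin_{\Kfin,T}$, or by invoking the asymptotic equivalence), which is a property of the surrounding construction rather than of the abstract theorem you were asked to prove.
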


Consider $F_T$ defined in \eqref{eq:finite_objective} and $F$ defined in \eqref{eq:infinite_objective}. To show the condition in \eqref{eq:limsup}, let $\gamma_0 \geq 0$ be such that $F_T(\gamma_0)<\infty$ for all $T>0$ and $F(\gamma_0)<\infty$. For any $\gamma\geq 0$ such that $F(\gamma) < \infty$, let $\gamma_T \defeq (1/T) \gamma_0 + (1-1/T) \gamma $. Then, we have that
\begin{align}
\limsup_{T \to \infty} F_T(\gamma_T) &= \limsup_{T \to \infty} \int f(\gamma_T,\lambda) \, \widehat{\rho}_T(d \lambda), \\
&\leq \limsup_{T \to \infty} \pr{ \frac{1}{T} \int f(\gamma_0,\lambda) \, \widehat{\rho}_T(d \lambda) + \pr{1-\frac{1}{T}} \int f(\gamma,\lambda) \, \widehat{\rho}_T(d \lambda) },  \\
&= \int f(\gamma,\lambda) \, {\rho}(d \lambda) = F(\gamma),
\end{align}
where the second inequality is due to convexity of $f(\gamma,\lambda)$ and the last equality is due to pointwise convergence in \eqref{eq:weak_conv2}.

To show the condition in \eqref{eq:liminf}, let $\gamma\geq 0$ such that $F(\gamma) < \infty$ and $\{\gamma_T\}$ be a sequence such that $\gamma_T \to \gamma$. Defining $\bar{f}_T(\lambda) \defeq f(\gamma_T,\lambda)$ for all $T>0$, and $\bar{f}(\lambda) \defeq f(\gamma,\lambda)$, we have that $\lim_{T\to \infty} \bar{f}_T(\lambda) = \bar{f}(\lambda)$ for any $\lambda$ due to continuity of $f(\gamma,\lambda)$. We use a variant of Fatou's lemma with varying measures, as stated in the following lemma.
\begin{lemma}[{Fatou's Lemma with Varying Measure adapted from \citep{feinberg_fatous_2019}}]\label{thm:fatou}
Let $\mu_T$ for each $T>0$ and $\mu$ be finite measures over $\mathbb{X}$ such that $\supp(\mu_T) \subseteq K$ for all $T>0$ and $\supp(\mu) \subseteq K$ for a compact $K\subseteq \mathbb{X}$ and 
\begin{equation}\label{eq:weak_conv_3}
\lim_{T\to\infty} \int \phi(x) \mu_T(d x) = \int \phi(x) \mu(d x),
\end{equation}
for any continous function $\phi$ defined on $K$. Furthermore, let $f_T$ for each $T>0$ and $f$ be continous functions defined on $K$ such that $\lim_{T\to \infty} f_T(x) =f(x)$ for each $x\in K$. Then, we have that
\begin{equation}\label{eq:fatou}
\int f(x) \mu(d x) \leq \liminf_{T\to\infty} \int f_T(x) \mu_T(d x).
\end{equation}
\end{lemma}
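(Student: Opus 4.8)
The plan is to derive Lemma~\ref{thm:fatou} from the Fatou lemma for weakly converging measures of \cite{feinberg_fatous_2019}, whose hypotheses are: (a) $\mu_T \to \mu$ weakly; (b) the negative parts $\max\{-f_T,0\}$ are uniformly integrable with respect to the family $\{\mu_T\}$; and (c) the lower-limit condition $f(x) \le \liminf_{T\to\infty} f_T(x_T)$ for every $x\in\mathbb{X}$ and every sequence $x_T \to x$. Under (a)--(c) the cited result gives $\int f(x)\,\mu(dx) \le \liminf_{T\to\infty} \int f_T(x)\,\mu_T(dx)$, which is exactly \eqref{eq:fatou}, so the proof reduces to checking the three items. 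Item (a) is precisely \eqref{eq:weak_conv_3} — since $K$ is compact, every continuous $\phi$ on $K$ is bounded, so \eqref{eq:weak_conv_3} is weak convergence on $K$ — and taking $\phi\equiv 1$ also yields $\mu_T(K)\to\mu(K)$, whence $C\defeq\sup_T \mu_T(K)<\infty$.

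For item (b) I would first show that $\{f_T\}$ is uniformly bounded below on $K$, say $f_T \ge -M$; then $\max\{-f_T,0\}\le M$ and uniform integrability against the mass-bounded family $\{\mu_T\}$ is immediate. For item (c), given $x_T\to x$ in $K$ I would split $f_T(x_T) = f(x_T) + \bigl(f_T(x_T)-f(x_T)\bigr)$: the first summand tends to $f(x)$ by continuity of $f$, so it suffices that the second tends to $0$. In fact, since $f$ is also continuous here (not just the $f_T$), one can bypass \cite{feinberg_fatous_2019} entirely: weak convergence gives $\int f(x)\,\mu_T(dx)\to\int f(x)\,\mu(dx)$, so the claim is equivalent to $\liminf_{T\to\infty}\int\!\bigl(f_T(x)-f(x)\bigr)\,\mu_T(dx)\ge 0$, and if $\sup_{x\in K}|f_T(x)-f(x)|\to 0$ the absolute value of this integral is at most $C\sup_{x\in K}|f_T(x)-f(x)|\to 0$, finishing the proof.

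Both (b) and (c) rest on upgrading the \emph{pointwise} convergence $f_T(x)\to f(x)$ that the hypothesis literally supplies to \emph{uniform} control on $K$ — a uniform lower bound for (b), and $\sup_{x\in K}|f_T - f|\to 0$ for (c). This is the step I expect to be the real obstacle, since pointwise convergence of continuous functions on a compact set does not by itself give either conclusion (a tall, narrow, drifting bump shows this). The way out, which I would use, exploits the structure of every application of the lemma in this paper: there $\mathbb{X}=\R$, $f_T(\lambda)=f(\gamma_T,\lambda)$ with $f$ the jointly continuous function of \eqref{eq:function} and $\gamma_T\to\gamma$ confined to a compact subinterval of $(\sup K,\infty)$, so on $K$ the family $\{f_T\}$ is equicontinuous and uniformly bounded; a pointwise-convergent equicontinuous sequence on a compact set converges uniformly, which delivers the uniform lower bound and $\sup_{x\in K}|f_T-f|\to 0$, and hence \eqref{eq:fatou}.
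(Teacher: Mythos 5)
The paper never actually proves this lemma: it is imported as an ``adaptation'' of the cited Feinberg--Kasyanov--Liang result and used as a black box in Step 4 of the proof of Theorem~\ref{thm:strong_duality}, so the only meaningful comparison is with the cited theorem itself. Your reduction to that theorem is the right move, and your diagnosis is sharper than the lemma's statement: the cited result needs the lower-limit condition $f(x)\le\liminf_{T\to\infty}f_T(x_T)$ along every sequence $x_T\to x$ (together with uniform integrability of the negative parts), and this does \emph{not} follow from the pointwise convergence hypothesized here. Indeed, as literally stated the lemma is false: take $K=[0,1]$, $\mu_T=\delta_{1/T}\to\mu=\delta_0$ weakly, $f\equiv 0$, and $f_T(x)=-\max\{0,\,1-T^2\lvert x-1/T\rvert\}$ a continuous negative bump of depth $1$ and width $2T^{-2}$ drifting toward $0$; then $f_T\to 0$ pointwise on $K$, yet $\int f_T\,d\mu_T=-1$ for all $T$ while $\int f\,d\mu=0$, violating \eqref{eq:fatou}. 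So your ``tall, narrow, drifting bump'' concern is not a quibble — some strengthening of the hypotheses is genuinely required — and your repair (equicontinuity of $\{f_T\}$ on $K$, hence uniform convergence, which together with the uniform mass bound $\sup_T\mu_T(K)<\infty$ in fact upgrades the conclusion to convergence of the integrals) is a clean and correct one, available in the paper's only use of the lemma, where $f_T(\lambda)=f(\gamma_T,\lambda)$ with $f$ from \eqref{eq:function} jointly continuous and $\gamma_T\to\gamma$.

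One residual caveat concerns that application rather than your argument: joint continuity of $f(\gamma',\lambda)$ on a compact neighborhood of $\{\gamma\}\times K$ requires $K$ to avoid the pole $\lambda=\gamma'$, i.e.\ essentially $\gamma>\sup K$ with $K\supseteq\supp(\widehat{\rho}_T)$ for all large $T$ — exactly the ``compact subinterval of $(\sup K,\infty)$'' you posit. The paper only guarantees $\gamma>\sup(\supp\rho)$ and a uniform bound $b$ on $\sup(\supp\widehat{\rho}_T)$, not $b<\gamma$, so for finite $T$ one must either argue that the finite-horizon spectra eventually stay below $\gamma$ or invoke the convention $F_T(\gamma_T)=+\infty$ (for which the liminf inequality is vacuous). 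That bookkeeping belongs to Step 4 of the duality proof, not to the lemma, but it is worth flagging since your proof of the lemma leans on precisely this structure.
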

Note that the support of $\widehat{\rho}_T$ are all uniformly bounded and $\widehat{\rho}_T$ converges weakly to $\rho$ as in \eqref{eq:weak_conv2}. Therefore, by Lemma~\ref{thm:fatou}, we have that
\begin{align}
\liminf_{T \to \infty} F_T(\gamma_T) &= \liminf_{T \to \infty} \int \bar{f}_T(\lambda) \, \widehat{\rho}_T(d \lambda), \\
&\geq \int \bar{f}_T(\lambda) \, {\rho}(d \lambda) = F(\gamma).
\end{align}
Therefore, by Theorem~\ref{thm:gamma_conv}, the limits in~\eqref{eq:limit_inf} hold.
\QED

\subsection{Proof of Lemma \ref{thm:dual_suboptimal}}

The convex mapping $\clf{X}\! \mapsto\! \Tr \clf{X}^{\inv}$ for $\clf{X}\!\psdg\! 0$ can be expressed via Fenchel duality as
\begin{equation}\label{eq:trace_of_inverse}
   \sup_{\M \psdg 0 } -\Tr(\clf{X} \M ) + 2\Tr(\sqrt{\M}) = \begin{cases} \Tr(\clf{X}^{-1}), & \quad\text{if } \clf{X}\psdg 0 \\  +\infty, &\quad \text{o.w.} \end{cases}
\end{equation}
Using the identity~\eqref{eq:trace_of_inverse}, we rewrite the original problem \eqref{eq:sub_optimal} as, $$\inf_{\K \in\causal}  \sup_{\M \psdg 0 } -\Tr((\I - \gamma ^\inv \CC_{\K} ) \M ) + 2\Tr(\sqrt{\M}).$$ The proof follows immediately from minimax theorem as the objective function is concave-convex. \QED

\subsection{Proof of Theorem \ref{thm:suboptimal_DR_RO}}

The following lemma will be useful in the proof of Theorem \ref{thm:suboptimal_DR_RO}.
    \begin{lemma}[{Wiener-Hopf Technique \citep{kailath_linear_2000}}]
    Consider the problem of approximating a non causal controller $\K_\circ$ by a causal controller $\K$, such that $\K$ minimises the cost $\Tr(\CC_{\K} \M )$, i.e.,
    \begin{align}
        \inf_{\K \in\causal} \Tr(\CC_{\K} \M )
    \end{align}
    where $\M \psdg 0$, $\CC_{\K} = \left(\K - \K_\circ \right)^{\ast}\Delta^{\ast}\Delta\left(\K - \K_\circ \right)$ and $\K_\circ$ is the non-causal controller that makes the objective defined above zero. Then, the solution to this problem is given by 
    \begin{align}
        \K = \Delta^\inv\cl{\Delta \K_\circ \L}_{\!+} \L^\inv,
        \label{eq::nehari_controller}
    \end{align}
    where $\L$ is the unique causal and causally invertible spectral factor of $\M$ such that $\M = \L \L^\ast$ and $\cl{ \cdot }_{\!+}$ denotes the causal part of an operator.
    \label{lemma::nehari}
\end{lemma}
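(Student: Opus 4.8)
\textbf{Proof proposal for Lemma~\ref{lemma::nehari} (Wiener--Hopf technique).}

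The plan is to reduce the problem to a classical projection argument by exploiting the spectral factorization of $\M$. First, I would substitute $\M = \L\L^\ast$ with $\L$ causal and causally invertible, and expand the objective
$\Tr(\CC_\K \M) = \Tr\pr{(\K-\K_\circ)^\ast \Delta^\ast \Delta (\K-\K_\circ) \L \L^\ast}$.
Using the cyclic property of the (normalized) trace $\Tr$ over block Toeplitz operators, this equals $\Tr\pr{\L^\ast (\K-\K_\circ)^\ast \Delta^\ast \Delta (\K-\K_\circ) \L} = \Norm[F]{\Delta(\K-\K_\circ)\L}^2$, i.e.\ the whole cost becomes the squared $H_2$/Frobenius norm of the single operator $\Delta\K\L - \Delta\K_\circ\L$. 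Here I would note that because $\Delta$, $\L$, $\K$ are all causal and $\Delta,\L$ are causally invertible, the map $\K \mapsto \Delta\K\L$ ranges over \emph{all} causal operators as $\K$ ranges over $\causal$; so the minimization is
$\inf_{\clf{U}\in\causal} \Norm[F]{\clf{U} - \Delta\K_\circ\L}^2$.

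Second, I would invoke the orthogonal decomposition of the Hilbert space of Hilbert--Schmidt (Frobenius) operators into causal and strictly anticausal parts, which are orthogonal under the inner product $\inner{\clf{X}}{\clf{Y}} = \Tr(\clf{X}^\ast\clf{Y})$. Writing $\Delta\K_\circ\L = \cl{\Delta\K_\circ\L}_+ + \cl{\Delta\K_\circ\L}_-$, the Pythagorean theorem gives that the infimum over causal $\clf{U}$ is attained uniquely at $\clf{U} = \cl{\Delta\K_\circ\L}_+$, with residual error $\Norm[F]{\cl{\Delta\K_\circ\L}_-}^2$. Unwinding the substitution $\clf{U} = \Delta\K\L$ yields $\K = \Delta^\inv \cl{\Delta\K_\circ\L}_+ \L^\inv$, which is causal since $\Delta^\inv$ and $\L^\inv$ are causal and the projection $\cl{\cdot}_+$ is causal; this is exactly \eqref{eq::nehari_controller}. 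Uniqueness of the minimizer follows from strict convexity of the squared norm together with the injectivity of $\K\mapsto\Delta\K\L$ on $\causal$.

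The main obstacle is not the algebra but making the functional-analytic setup rigorous in the infinite-horizon operator setting: one must verify that $\Delta\K\L$ indeed sweeps out all of causal $H_2$ as $\K$ varies (this uses causal invertibility of the spectral factors $\Delta$ and $\L$, hence that $\Delta^\inv,\L^\inv\in\causal$), that the causal/strictly-anticausal splitting is a genuine orthogonal direct sum with respect to $\Tr(\cdot^\ast\cdot)$, and that the trace manipulations (cyclicity, and $\Tr(\clf{X}^\ast\clf{X}) = \Norm[F]{\clf{X}}^2$) are valid for the block Toeplitz operators at hand — in particular that $\Delta\K_\circ\L$ has finite Frobenius norm so the projection is well defined. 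Modulo these standard facts about Hardy spaces of operator-valued functions on the circle (which is where the citation to \citep{kailath_linear_2000} does the work), the result is immediate from the projection theorem.
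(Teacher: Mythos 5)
Your proposal is correct and follows essentially the same route as the paper's proof: substitute the spectral factor $\M=\L\L^\ast$, use cyclicity of $\Tr$ to turn the cost into $\Norm[F]{\Delta\K\L-\Delta\K_\circ\L}^2$, and minimize by matching the causal part of $\Delta\K_\circ\L$, then invert the causal factors $\Delta$ and $\L$. The only difference is presentational: you make explicit (via the causal/strictly-anticausal orthogonal splitting and the change of variable $\clf{U}=\Delta\K\L$) the projection step that the paper simply asserts as "cancelling off the causal part," which is a welcome tightening rather than a new argument.
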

\begin{proof}
    Let $\L$ be the unique causal and causally invertible spectral factor of $\M$, $\ie$ $\M = \L \L^{\ast}$. Then, using the cyclic property of $\Tr$, the objective can be written as,
    \begin{align}
        \inf_{\K \in\causal} \Tr( \Delta\left(\K - \K_\circ \right) \M \left(\K - \K_\circ \right)^{\ast}\Delta^{\ast}) &= \inf_{\K \in\causal} \Tr( \left(\Delta\K - \Delta\K_\circ \right) \L \L^{\ast}\left(\Delta\K - \Delta\K_\circ \right)^{\ast}) \\
        &= \inf_{\K \in\causal} \Tr( \left(\Delta\K\L - \Delta\K_\circ\L \right) \left(\Delta\K\L - \Delta\K_\circ\L \right)^{\ast}) \\
        &= \inf_{\K \in\causal} \left\| \Delta\K\L - \Delta\K_\circ \L \right\|^2_{F},
    \end{align}
 where $\|A\|^2_F$ represents the square of the frobenium norm of a matrix. 
 Since $\Delta, \K$ and $\L$ are causal, and $\Delta\K_\circ\L$ can be broken into causal and non-causal parts, it is evident that the controller that minimises the objective is the one that makes the term $\Delta\K\L - \Delta\K_\circ \L$ strictly anti-causal, cancelling off the causal part of $\Delta\K_\circ \L$. This means that the optimal controller satisfies,
 \begin{align}
    \Delta\K\L = \cl{\Delta \K_\circ \L}_{\!+}.
\end{align}
Also, since $\L^\inv$ and $\Delta^\inv$ are causal, the optimal causal controller is given by \eqref{eq::nehari_controller}.
\end{proof}

\paragraph{\textit{Proof of Theorem \ref{thm:suboptimal_DR_RO}.}}
We first simplify our optimisation problem \eqref{eq:sub_optimal} using Lemma \ref{thm:dual_suboptimal}. We then find the conditions on the optimal optimisation variables using Karush-Kuhn-Tucker (KKT) conditions.

\paragraph{{Reformulation of the original problem.}}
  Using Lemma \ref{thm:dual_suboptimal}, the original optimisation problem \eqref{eq:sub_optimal} can be written as,
\begin{align*}
       \inf_{\substack{\K \in\causal, \\ \gamma \I \psdg \CC_{\K} }} \Tr(  ( \I - \gamma^\inv \CC_{\K} )^{\-1}) &=  \inf_{\K \in\causal}  \sup_{\M \psdg 0 } -\Tr((\I - \gamma ^\inv \CC_{\K} ) \M ) + 2\Tr(\sqrt{\M}) \\
       &=  \sup_{\M \psdg 0 }   -\Tr(\M)  + 2\Tr(\sqrt{\M}) +\inf_{\K \in\causal}\gamma^\inv\Tr(\CC_{\K} \M )
\end{align*}

\paragraph{{KKT Conditions.}}
Note that this problem is convex in $\mathcal{M}$. Hence, the optimal solution $\mathcal{M}_{\gamma}$ satisfies the following KKT conditions, \cite{blackbook}
\begin{align}
    \frac{d}{d\mathcal{M}} \left[ -\Tr(\M)  + 2\Tr(\sqrt{\M}) +\inf_{\K \in\causal}\gamma^\inv\Tr(\CC_{\K} \M ) \right] = 0.
\end{align}
To calculate this derivative, we will use the following identity \cite{bertsekas2016nonlinear}. If we have,
\begin{align}
    g(x) = \min_y f(x, y) = f(x, \hat{y}(x)),
\end{align}
where $f(x, y)$ is convex and $\hat{y}(x)$ is the optimum solution for each $x$, then,
\begin{align}
    \nabla_x g(x) =\nabla_x f(x, y) \bigg\rvert_{(x, \hat{y}(x))}.
    \label{eq::derivative_nabla}
\end{align}
This means that the optimal solution $\M_{\gamma}$ satisfies the KKT,
\begin{align}
    -\I + \M_{\gamma}^{-\half} + \gamma^\inv \CC_{\K_\gamma} = 0.
    \label{eq::KKT_optimal_M}
\end{align}
Substituting the optimal controller $\K_{\gamma}$ from Lemma \ref{lemma::nehari}, we get,
\begin{align}
    -\I + (\L_{\gamma} \L_{\gamma}^\ast)^{-\half} + \gamma^\inv  \L_{\gamma}^{\!-\ast} \{\Delta\K_{\circ}\L_{\gamma}\}_{\!-}^\ast  \{\Delta\K_{\circ}\L_{\gamma}\}_{\!-} \L_{\gamma}^\inv = 0.
\end{align}
Multiplying by $\L^\ast$ from the left and $\L$ from the right, we get,
\begin{align}
      &-\L_{\gamma}^\ast \L_{\gamma} + \L_{\gamma}^\ast(\L_{\gamma} \L_{\gamma}^\ast)^{-\half} \L_{\gamma} + \gamma^\inv \L_{\gamma}^\ast \L_{\gamma}^{\!-\ast} \{\Delta\K_{\circ}\L_{\gamma}\}_{\!-}^\ast  \{\Delta\K_{\circ}\L_{\gamma}\}_{\!-} \L_{\gamma}^\inv \L_{\gamma} = 0.
\end{align}
But, $(\L_{\gamma}^\ast(\L_{\gamma} \L_{\gamma}^\ast)^{-\half} \L_{\gamma})^2 = \L_{\gamma}^\ast \L_{\gamma}$. Hence, the KKT can be written as,
\begin{align}
      -\L_{\gamma}^\ast \L_{\gamma} + (\L_{\gamma}^\ast \L_{\gamma} )^{\half} + \gamma^\inv \{\Delta\K_{\circ}\L_{\gamma}\}_{\!-}^\ast  \{\Delta\K_{\circ}\L_{\gamma}\}_{\!-} = 0.
\end{align}
On completing the squares, we get,
\begin{align}
      \left( (\L_{\gamma}^\ast \L_{\gamma} )^{\half} - \frac{\I}{2} \right)^2 = \frac{\I}{4} + \gamma^\inv \{\Delta\K_{\circ}\L_{\gamma}\}_{\!-}^\ast  \{\Delta\K_{\circ}\L_{\gamma}\}_{\!-}.
      \label{eq::kkt_square_form}
\end{align}
\paragraph{{Square Root of a Positive Definite Matrix.}}
Note that the matix $\frac{\I}{4} + \gamma^\inv \{\Delta\K_{\circ}\L_{\gamma}\}_{\!-}^\ast  \{\Delta\K_{\circ}\L_{\gamma}\}_{\!-}$ is positive definite. Consider the singular value decomposition of the following matrices,
\begin{align}
       (\L_{\gamma}^\ast \L_{\gamma} )^{\half} - \frac{\I}{2}  &= U \Lambda  U^*, \label{eq::PSD1} \\
       \frac{\I}{4} + \gamma^\inv \{\Delta\K_{\circ}\L_{\gamma}\}_{\!-}^\ast  \{\Delta\K_{\circ}\L_{\gamma}\}_{\!-} &= V \Gamma V^*.
\end{align}

Now, \eqref{eq::kkt_square_form} gives, $\Lambda^2 = \Gamma$.
Note that the diagonal entries of $\Gamma$ are greater than $\frac{1}{4}$. Thus, the absolute value of the diagonal entries of $\Lambda$ are greater than $\frac{1}{2}$. But, since $(\L_{\gamma}^\ast \L_{\gamma} )^{\half}$ is positive definite, we have, $\Lambda = \sqrt{\Gamma}$ due to \eqref{eq::PSD1}. Hence, we can take the unique positive definite square root of the right hand side of \eqref{eq::kkt_square_form} to get,
\begin{align}
       (\L_{\gamma}^\ast \L_{\gamma} )^{\half} - \frac{\I}{2} = \sqrt{\frac{\I}{4} + \gamma^\inv \{\Delta\K_{\circ}\L_{\gamma}\}_{\!-}^\ast  \{\Delta\K_{\circ}\L_{\gamma}\}_{\!-}}.
\end{align}
\QED

\subsection{Proof of Lemma~\ref{lemma:finiteBl}}
Denote by $L_{\gamma}(\e^{j\omega})$ the frequency-domain counterpart of the causal operator $\L_\gamma$ where 
\begin{align}\label{eq:L}
L_\gamma(\e^{j\omega})=\sum_{t=0}^{\infty} l_t e^{-j\omega t}.
\end{align}
From lemma 4 in \cite{sabag2021regret}, we have that $\Delta(\ejw) K_\circ(\ejw)$ can be written as the sum of a causal and strictly anticausal transfer functions:
\begin{align}
     &\Delta(\ejw) K_\circ(\ejw)= T(\ejw)+U(\ejw)\label{eq:deltaK}\\
 &T(\ejw)= \bar{C}{(e^{-j\omega}I-\bar{A})}^{-1}\bar{D}, \quad U(\ejw)=\bar{C}P (A {(\ejw I-A)}^{-1} + I)B_w\label{eq:TU}
\end{align}
where $\bar{A}$, $\bar{D}$, and $\bar{C}$ are as defined in the statement of Lemma \ref{lemma:finiteBl}, and $P$ the unique stabilizing solution to the Riccati equation previously mentioned in \ref{subsec::subK}.
Then, using equations \eqref{eq:deltaK} and \eqref{eq:TU}, $S_{\gamma,\-}$ , previously defined as $S_{\gamma,\-}(\e^{j\omega}) = \cl{\Delta K_\circ L_\gamma}_{\-}(\e^{j\omega})$, can be written as: 
\begin{align}
     S_{\gamma,\-}(\ejw) &=  \cl{T  L_\gamma}_{\-}(\ejw) +  \cl{U L_\gamma}_{\-}(\ejw)\\
    &\stackrel{(a)}{=} \cl{\bar{C}(\ejw I - \bar{A})^\inv \bar{D}L_\gamma(\ejw)}_{\-}\\
    &\stackrel{(b)}{=} \cl{\bar{C}\sum_{t=0}^{\infty} e^{j\omega (t+1)} \bar{A}^t \bar{D} \sum_{m=0}^{\infty}l_m e^{-j\omega m}}_{\-}\label{eq:ac}\\
    &\stackrel{(c)}{=} {\bar{C}\left(\sum_{t=0}^{\infty} e^{j\omega(t+1)} \bar{A}^t \right) \left(\sum_{m=0}^{\infty}\bar{A}^m \bar{D}l_m\right) }.
\end{align}
Here, (a) holds because both $U(\ejw)$ and $L_\gamma(\ejw)$ are causal, so the strictly anticausal part of $U(\ejw) L_\gamma(\ejw)$ is zero. (b) holds as we do the Neumann series expansion of $(\ejw I-\bar{A})$ and replace $L_\gamma(\ejw)$ by its equation \eqref{eq:L}. (c) holds as we take the anticausal part of expression \eqref{eq:ac} to be the strictly positive exponents of $\ejw$. We complete the proof by using again the Neuman series, defining $\bar{B}_{\gamma} \defeq \sum_{t=0}^{\infty} \bar{A}^t \bar{D} l_t $, and leveraging Parseval's theorem to conclude the result $\bar{B}_{\gamma} = \frac{1}{2\pi} \int_{0}^{2\pi} (I-\e^{j\omega} \overline{A})^\inv \overline{D} L_\gamma(\e^{j\omega}) d\omega.$.
\QED
\subsection{Proof of Theorem \ref{thm:fixed-point}}
For $\gamma>\gamma_{\textrm{RO}}$, the KKT equation $N_\gamma(\ejw)= \frac{1}{4}\pr{I \+ \sqrt{I \+ 4 \gamma^\inv S_{\-,\gamma}^\ast(\ejw)  S_{\-,\gamma}(\ejw) }}^{\!2}$ is well defined. Thus, the mapping $F_4\circ F_3\circ F_2\circ F_1:\bar{B} \mapsto \bar{B}$ admits a fixed point $\bar{B}_\gamma$ for a fixed $\gamma$. 
The uniqueness of $\bar{B}_\gamma$ follows from the following:
The concavity of problem \eqref{eq:suboptimal_prob} in $\M_\gamma$ allows us to argue for uniqueness of $M_\gamma(\ejw)$. Given now that $M_\gamma(\ejw)=L_\gamma(\ejw)L_\gamma^\ast(\ejw)$, where $L_\gamma(\ejw)$ is a spectral factor of $M_\gamma(\ejw)$, which is causal and causally invertible, then $L_\gamma(\ejw)$ is unique up to a unitary transformation from the right. Fixing the choice of the unitary transformation in the spectral factorization (eg. positive-definite factors at infinity \citep{ephremidze_algorithmization_2018}) results in a unique $L_\gamma(\ejw)$. Given that $\bar{A}$ and $\bar{D}$ are fixed, $\bar{B}_\gamma=\frac{1}{2\pi} \int_{0}^{2\pi} (I-\e^{j\omega} \overline{A})^\inv \overline{D} L_\gamma(\e^{j\omega}) d\omega$ is unique.
\QED


\section{Spectral Factorization Method} \label{appdx:spect}
To perform the spectral factorization of an irrational function $N_\gamma(\ejw)$, we use the spectral factorization method presented in Algorithm \ref{alg:spect}. 
This algorithm returns samples of the spectral factor on the unit cirlcle. 

The method is efficient without requiring rational spectra, and the associated error term, featuring a purely imaginary logarithm, rapidly diminishes with an increased number of samples. It's worth noting that this method is designed specifically for scalar functions. 

\begin{algorithm}[ht!]
\caption{Spectral Factorization via DFT} 
\label{alg:spect}
\KwData{$N_\gamma(\mathrm{e}^{j\omega})>0$, and $N=2^k$}
\KwResult{$L_\gamma(\mathrm{e}^{j 2\pi n/N}) \quad \forall n\in [0,1,...,N-1]$}  
\textit{\textcolor{black}{/ / Compute the logarithm of $N_\gamma(\mathrm{e}^{j\omega})$}}\;

{\tiny [0]} $h(\mathrm{e}^{j\omega}) \gets \log(N_\gamma(\mathrm{e}^{j\omega}))$ \;  

\textit{\textcolor{black}{/ / Compute the inverse discrete Fourier transform (IDFT)}}\;

{\tiny [1]} $h_t \gets \text{IDFT}(h(\mathrm{e}^{j\omega}))  \hspace{0.35mm}\forall t\in[0,1,...,N/2]$\;

\textit{\textcolor{black}{/ / Compute the spectral factorization}}\;

{\tiny [2]} $L_\gamma(\mathrm{e}^{j 2\pi n /N}) \gets \exp\left(\frac{1}{2}h_0 + \sum_{t=1}^{N/2-1} h_t \exp\left(-j \frac{2\pi nt}{N}\right) + \frac{1}{2}(-1)^n h_{N/2}\right)$\;
\end{algorithm}

\section{Additional Simulations} \label{appdx:sim}
We compare the worst-case expected regret cost as defined in \eqref{def:worst_case_regret} of the DR-RO controller against the $H_2$, $H_\infty$ \cite{blackbook}, and RO \cite{sabag2021regret} controllers, considering the  worst-case disturbance of each controller for 3 other systems: [AC7], [NN3] and [RO6] (described in \cite{aircraft}), and we show the results in Table \ref{table:worstregret}.  
\begin{table}[!ht]
    \centering
    \small
    \setlength\tabcolsep{4pt} 
    \begin{tabular}{|c||c|c|c||c|c|c||c|c|c|} 
        \hline
        \textbf{ } & \multicolumn{3}{c||}{\textbf{[AC7]}} & \multicolumn{3}{c||}{\textbf{[NN3]}} & \multicolumn{3}{c|}{\textbf{[RO6]}} \\
        \hline
        \textbf{ } & \textbf{r=0.05} & \textbf{r=1} & \textbf{r=10} & \textbf{r=0.05} & \textbf{r=1} & \textbf{r=10} & \textbf{r=0.05} & \textbf{r=1} & \textbf{r=10} \\
        \hline \hline
        $\bm{H_2}$ & \textcolor{blue}{107.33}& 411.60& 14681& \textcolor{blue}{21.787}&	101.19&	4960.2&\textcolor{blue}{4.4425}	&21.070	&906.58  \\
        \hline
        \textbf{$\bm{H_\infty}$} &113.25&425.04& 14135& 55.346	&203.88&	6398.3&6.2662&	22.819&	706.44 \\
        \hline
        \textbf{RO} & 110.73&	401.76&\textcolor{green}{12153} & 36.348&	131.87&	\textcolor{green}{3990.9}&5.9874&	21.725&	\textcolor{green}{657.74} \\
        \hline
        \textbf{DR} & \textbf{\textcolor{blue}{107.30}} & \textbf{397.19} & \textbf{\textcolor{green}{12134}} & \textbf{\textcolor{blue}{21.781}} & \textbf{92.658} & \textbf{\textcolor{green}{3744.2}} & \textbf{\textcolor{blue}{4.4392}} & \textbf{19.337} & \textbf{\textcolor{green}{647.12}}\\
        \hline
    \end{tabular}
    \caption{The worst-case expected regret of each controller for different values of \(r\), for different systems: [AC7],[NN3] and [RO6] described in \cite{aircraft}. The cost of the DR-RO controller achieves the minimum in all systems, and it closely matches $H_2$ at small $r$ (see values in blue), while it closely matches RO at larger $r$ (see values in green).}
    \label{table:worstregret}
\end{table}
For all systems, for any $r$, the DR-RO controller achieves the minimum cost.
Moreover, its cost closely matches that of $H_2$ controller for smaller $r$ while it converges towards the behavior of the RO controller for larger $r$. This highlights the adaptability of the DR-RO controller which interpolates between $H_2$ and RO controllers, depending on the value of $r$ and achieves balanced performance across all input disturbances.
\end{document}